\newcommand{\suchthat}{\;\ifnum\currentgrouptype=16 \middle\fi|\;}
\newtheorem{theorem}{Theorem}[section]
\newtheorem{proposition}[theorem]{Proposition}
\newtheorem{corollary}[theorem]{Corollary}
\newtheorem{definition}[theorem]{Definition}
\theoremstyle{definition}
\newtheorem{remark}[theorem]{Remark}
\title[Distributions and the Grassmann algebra]{Distribution spaces and a new construction of stochastic processes associated with the Grassmann algebra}
\author[D. Alpay]{Daniel Alpay}
\address{(DA) 
Faculty of Mathematics, Physics, and Computation\\
Schmid College of Science and Technology\\
Chapman University\\
One University Drive\\
Orange, California 92866\\
USA}
\email{alpay@chapman.edu}
\author[I. L. Paiva]{Ismael L. Paiva }
\address{(ILP)
Schmid College of Science and Technology\\
Chapman University\\
One University Drive\\
Orange, California 92866\\
USA}
\email{depaiva@chapman.edu}
\author[D. C. Struppa]{Daniele C. Struppa}
\address{(DCS) 
Faculty of Mathematics, Physics, and Computation\\
Schmid College of Science and Technology\\
Chapman University\\
One University Drive\\
Orange, California 92866\\
USA}
\email{struppa@chapman.edu}
\begin{document}
\bibliographystyle{plain}
\begin{abstract}
We associate with the Grassmann algebra a topological algebra of distributions, which allows the study of processes analogous to the corresponding free stochastic processes with stationary increments, as well as their derivatives.
\end{abstract}
\maketitle

\noindent MCS classes: 30G35, 15A75

\noindent {\em Key words}: Grassmann algebra, Fock space, stochastic processes
\date{today}
\setcounter{tocdepth}{1}
\tableofcontents


\section{Introduction}
\setcounter{equation}{0}

In the present work, we develop some aspects of the Grassmann algebra counterpart of the noncommutative Fock space \cite{MR1217253} and noncommutative stochastic distributions \cite{MR3231624}. Moreover, we study an analog of stochastic processes with stationary increments (such as the fractional Brownian motion) and their derivatives in the setting of the Grassmann algebra $\Lambda$. The processes introduced here differ from the ones discussed by Rogers in, e.g., \cite{MR925919, rogers1994stochastic, rogers2003supersymmetry}.\smallskip

We recall that $\Lambda$, also called the exterior algebra, is the algebra on a field $\mathcal{K}$ generated by a finite or countable set of elements $i_n$ not belonging to and linearly independent over $\mathcal{K}$, and satisfying
\begin{equation}
i_ni_m+i_mi_n=0, \qquad n,m=1,2,\hdots
\label{anticommutative}
\end{equation}
together with the identity element of $\mathcal{K}$. Usually, the field that is considered is the setting of the complex numbers $\mathbb{C}$. Here, we follow this choice.\smallskip

Due to its role on supersymmetry, an element of $\Lambda$ is often referred to as a supernumber. When the number of generators is finite, say $N$, we use the notation $\Lambda_N$ to evidence it. In our approach, $\Lambda=\cup_{N\in\mathbb{N}}\Lambda_N$. This differs from the way $\Lambda$ is usually treated in the literature, where formal infinite sums are considered. Here, if $z\in\Lambda$, there exists $n(z)$ such that $z\in\Lambda_{n(z)}$. We will consider closures of $\Lambda$ to study cases with an effectively infinite number of generators. The construction of generators in the finite case follows with a matrix representation. In the infinite case, it is less straightforward, but concrete realizations can be given \cite{berezin1979method,rogers1980global}.\smallskip

We note that \eqref{anticommutative} also holds for many types of hypercomplex numbers if $n\neq m$ \cite{gal2002introduction}. The difference here is that it also holds for $n=m$, so that
\[
i_n^2=0,\ldots\quad{\rm for}\quad n=1, \hdots,
\]
and in particular $\Lambda$ (and every $\Lambda_N$) has divisors of $0$.\smallskip

One can say that analysis on this setting started in 1937 with Cartan showing that the Grassmann algebra can represent the exterior algebra if one introduces the idea of derivative (and multiplication) by its generators \cite{cartan1937theorie}. In 1959, Martin considered supernumbers to study ``classical versions'' of physical functions for fermions and obtain their quantization through path integrals \cite{MR0109640,MR0109639}, an idea that would be later used by Schwinger to extend his quantum field theory to fermions \cite{schwinger1966particles}. Moreover, in 1966, Berezin independently started an extensive study of what is now known as supermathematics \cite{berezin1979method, MR914369}.\smallskip

When working with $\Lambda_N$ -- and even with $\Lambda$ to a certain extent -- only algebraic operations are involved, and there is no real problem of convergence since every element in $\Lambda$ generated only by $i_n$'s is nilpotent. On the other hand, $\Lambda$ needs to be completed for various natural problems of analytical character.\smallskip

The purpose of this work is to develop counterparts of classical notions from analysis and stochastic processes theory when taking into consideration the completion of $\Lambda$ to a Hilbert space -- denoted $\overline{\Lambda}^{(2)}$ (see Definition \ref{defdef}) -- and embedding it in a Gel\cprime fand triple
\begin{equation}
\label{gelfand_triple}
\mathfrak S_{1}\subset\overline{\Lambda}^{(2)}\subset\mathfrak S_{-1}.
\end{equation}
The space $\mathfrak S_{-1}$, as we prove, has an algebra structure of the type that was first introduced by Kondratiev \cite{MR1408433} in the setting of Hida's white noise space theory, and studied in a more generalized framework in \cite{MR3404695}. There are a number of parallels (and differences) between the present study and the works \cite{MR2610579,alp}, where the complex numbers were replaced by the commutative algebra of Kondratiev stochastic distributions -- see \cite{MR1408433} for the latter, and \cite{MR1203453,MR2444857} for Hida's white noise space theory and the associated spaces of stochastic distributions. In those cases, the underlying space, i.e., the white noise space, is the commutative Fock space, which is typically associated with bosons. To obtain the so-called full Fock space, one includes the antisymmetric Fock space, which can be associated with fermions. In this context, i.e., when considering the full Fock space, it is also possible to define a noncommutative analog of the Kondratiev space \cite{MR3231624,MR3038506}. Moreover, the same type of tools can be developed in the framework of $Q$-deformed commutation relations \cite{ji2016wick}. Here, we follow a similar approach, envisioning applications on stochastic processes and their derivatives. \smallskip

As in the works \cite{aal3,MR1408433}, and more recently in the quaternionic setting \cite{MR3615375}, a Gel\cprime fand triple together with its algebra structure allows one to consider functions from a compact metric space $E$, say $[0,1]$, into $\overline{\Lambda}^{(2)}$. Such functions may be continuous, but not differentiable, when viewed as an element of $\overline{\Lambda}^{(2)}$. On the other hand, if $f$ is seen as a $\mathfrak S_{-1}$-valued function, one has, under certain hypothesis, differentiability. The differentiability is, {\it a priori}, with respect to the strong topology of $\mathfrak S_{-1}$. However, it, in fact, happens in a Hilbert space, thanks to the assumed compactness of $E$. In particular, one can study stochastic processes and their derivatives in such spaces.\smallskip

The main results of this article are the following: First, in Section \ref{sec2}, after discussing symmetries in $\Lambda$, we endow the latter with a family of norms and obtain new inequalities on these norms -- see Theorem \ref{norm-ineq}. Then, in Section \ref{fock-space}, we introduce the Fock space that can be associated with $\overline{\Lambda}^{(2)}$, making connections with classical aspects of superanalysis, e.g., left derivatives and Berezin integrals. Next, in Section \ref{topological-algebra} we embed the Fock space into Gelfand triples given by \eqref{gelfand_triple} and prove that the product in $\mathfrak S_{-1}$ satisfies V\r{a}ge-like inequalities -- see Theorem \ref{ineq}. Finally, using those Gel\cprime fand triples, we present in Section \ref{stocder} a close counterpart of the free stochastic processes with stationary increments and their derivatives.\smallskip


\section{Symmetries and norms in $\Lambda$}
\setcounter{equation}{0}
\label{sec2}

\subsection{Grassmann algebra and supernumbers}

\begin{definition}
We denote by $\mathfrak I$ the set of $t$-uples $({a_1},\ldots, {a_t})\in\mathbb N^t$, where $t$ runs through $\mathbb N$ and $a_1<{a_2}<\cdots<{a_t}$. For $\alpha=({a_1},\ldots, {a_t})\in\mathfrak I$ we set $i_\alpha=i_{a_1}\cdots i_{a_t}$ and write an element $z\in\Lambda$ as a finite sum
\begin{equation}
\label{Grassmann-decomposition}
z=z_0+\sum_{\alpha\in\mathfrak{I}}z_\alpha i_\alpha,
\end{equation}
where the coefficients $z_0$ and $z_{a_1,\ldots,a_t}$ are complex numbers.
\end{definition}

The term that does not contain any Grassmann generator, $z_0$, is called the {\it body} of the number and is sometimes denoted by $z_B$, while $z_S=z-z_B$ is said to be the {\it soul} of the number \cite{MR1172996}. One can also give a meaning to the sum \eqref{Grassmann-decomposition} when it has an infinite number of terms, as we discuss in Section \ref{comp123}.

Sometimes it is convenient to define $i_0=1$ and ``extend'' the set $\mathfrak{I}$ to accommodate it. We will denote this new set $\mathfrak{I}_0$. Hence, a supernumber can be simply written as
\[
z = \sum_{\alpha\in\mathfrak{I}_0} z_\alpha i_\alpha.
\]

If $z=\sum_{\alpha\in\mathfrak{I}_0}z_\alpha i_\alpha$ and $w=\sum_{\beta\in\mathfrak{I}_0}w_\beta i_\beta$, their product makes sense since the sums are finite, and can be written as
\[
zw = \sum_{\alpha,\beta\in\mathfrak{I}_0}z_\alpha w_\beta i_\alpha i_\beta.
\]
Let $\alpha,\beta\in\mathfrak{I}$.
Note that $i_\alpha i_\beta=0$ when $i_\alpha$ and $i_\beta$ have a common factor $i_u$, with $u\in\mathbb N$. Moreover, when $i_\alpha i_\beta$ does not vanish, it might still not be an element of the set $\{i_\alpha : \alpha\in\mathfrak{I}\}$, since permutations might be necessary to obtain such type of element. However, because permutations only introduce powers of negative one, there exists a uniquely defined $\gamma\in\mathfrak{I}$ such that
\[
i_\alpha i_\beta = (-1)^{\sigma(\alpha,\beta)} i_\gamma,
\]
where $\sigma(\alpha,\beta)$ is the number of permutations necessary to ``build'' $\gamma$ from $\alpha$ and $\beta$. If such a relation holds, we write
\begin{equation}
\alpha\vee\beta=\gamma.
\label{vee}
\end{equation}
So $i_\alpha i_\beta = (-1)^{\sigma(\alpha,\beta)} i_{\alpha\vee\beta}$. To rewrite it in a manner that includes the possibility of $i_\alpha i_\beta=0$, we define $\alpha\vee\beta=\emptyset$ if there is no $\gamma\in\mathfrak{I}_0$ such \eqref{vee} is satisfied. Then,
\[
i_\alpha i_\beta = (-1)^{\sigma(\alpha,\beta)}\sum_{\gamma\in\mathfrak{I}_0} \delta_{\alpha\vee\beta,\gamma} i_\gamma,
\]
where $\delta_{\alpha\vee\beta,\gamma}$ is the Kronecker delta.\smallskip

\begin{remark}
We note that $\mathfrak{I}_0$ defined as above is a monoid, with identity given by $\alpha=0$.
\end{remark}

\begin{remark}
We note that $\mathfrak{I}_0$ is the counterpart of the set of indices $\ell$ considered in the case of infinitely many commuting (resp., noncommuting) variables -- see \eqref{index-commut} and \eqref{index-noncommut}.
\end{remark}

It is important to observe that $\Lambda$ is a $\mathbb{Z}_2$-graded algebra. In fact, the elements that commute with each other are of the form
\begin{equation}
z=z_0+\sum_{\substack{\alpha\in\mathfrak{I} \\ |\alpha| \ {\text even}}}z_\alpha i_\alpha,
\label{even-number}
\end{equation}
where $|\alpha|$ is the number of elements of $\alpha$. Those supernumbers are called the {\it even} supernumbers and their set is denoted by $\Lambda_{even}$. It is easy to verify that they commute with {\it every} element of $\Lambda$ and that, moreover, they form a commutative subalgebra.\smallskip

On the other hand, the elements that anticommute with each other are of the type
\begin{equation}
z=\sum_{\substack{\alpha\in\mathfrak{I} \\ |\alpha| \ {\text odd}}}z_\alpha i_\alpha.
\label{odd-number}
\end{equation}
They are known as {\it odd} supernumbers and do not form a subalgebra. In fact, it is an immediate result that the product of two odd supernumbers is an even supernumber. The set of odd supernumbers is denoted by $\Lambda_{odd}$.\smallskip

The following results are easy, but they are relevant to our discussion.\smallskip

\begin{proposition}
Let $v\in\Lambda_{odd}\subset\Lambda$. Then,
\[
v^2=0.
\]
\label{odd-square}
\end{proposition}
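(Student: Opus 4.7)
The plan is to expand $v^2$ using the decomposition of $v$ into a finite sum of basis monomials $i_\alpha$ with $|\alpha|$ odd, and then to show that all the resulting terms vanish, either individually or in pairs.

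First I would write
\[
v = \sum_{\substack{\alpha \in \mathfrak{I} \\ |\alpha| \text{ odd}}} v_\alpha i_\alpha,
\]
(a finite sum, since $v \in \Lambda$), so that
\[
v^2 = \sum_{\substack{\alpha, \beta \in \mathfrak{I} \\ |\alpha|, |\beta| \text{ odd}}} v_\alpha v_\beta \, i_\alpha i_\beta.
\]
I would then separate the diagonal terms ($\alpha = \beta$) from the off-diagonal ones. For the diagonal terms, each $i_\alpha$ and $i_\alpha$ share all their generator factors in common, so by the observation made earlier (that $i_\alpha i_\beta$ vanishes whenever $\alpha$ and $\beta$ share any $i_u$), we have $i_\alpha i_\alpha = 0$.

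For the off-diagonal terms, the main step is to establish the anticommutation identity $i_\alpha i_\beta = -\, i_\beta i_\alpha$ whenever both $|\alpha|$ and $|\beta|$ are odd. To do this, I would rewrite $i_\alpha i_\beta$ as a product of the individual generators $i_{a_1}\cdots i_{a_{|\alpha|}} i_{b_1}\cdots i_{b_{|\beta|}}$ and then move each $i_{b_k}$ past each $i_{a_j}$ using the defining relation \eqref{anticommutative}. Each such swap contributes a factor of $-1$, and the total number of swaps needed is $|\alpha|\cdot|\beta|$, producing an overall sign of $(-1)^{|\alpha||\beta|}$. Since both $|\alpha|$ and $|\beta|$ are odd, this sign is $-1$, yielding $i_\alpha i_\beta + i_\beta i_\alpha = 0$.

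The conclusion then follows by pairing the $(\alpha,\beta)$ and $(\beta,\alpha)$ contributions in the double sum:
\[
v^2 = \sum_{\alpha} v_\alpha^2 \, i_\alpha^2 + \sum_{\alpha \neq \beta} v_\alpha v_\beta \, i_\alpha i_\beta = 0 + \sum_{\{\alpha,\beta\},\,\alpha\neq\beta} v_\alpha v_\beta\bigl(i_\alpha i_\beta + i_\beta i_\alpha\bigr) = 0.
\]
The only real point requiring care is the sign count in the anticommutation of $i_\alpha$ with $i_\beta$; everything else is formal bookkeeping on the finite expansion, so I do not expect any serious obstacle.
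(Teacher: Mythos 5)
Your proof is correct and follows essentially the same route as the paper: expand $v^2$ as a double sum over odd monomials and kill the terms using $i_\alpha i_\beta + i_\beta i_\alpha = 0$ when $|\alpha|$ and $|\beta|$ are both odd. You in fact supply slightly more detail than the paper does (the explicit sign count $(-1)^{|\alpha||\beta|}$ and the separate treatment of diagonal terms, which the paper absorbs into the same anticommutation identity), and that bookkeeping is accurate.
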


\begin{proof}
The proof follows easily. Let $v=\sum_{\substack{\alpha\in\mathfrak{I} \\ |\alpha| \ {\text odd}}}v_\alpha i_\alpha$
\begin{eqnarray*}
v^2 & = & \frac{1}{2}\sum_{\substack{\alpha,\beta\in\mathfrak{I} \\ |\alpha|,|\beta| \ {\text odd}}} \left(v_\alpha v_\beta i_\alpha i_\beta + v_\beta v_\alpha i_\beta i_\alpha\right) \\
    & = & \frac{1}{2}\sum_{\substack{\alpha,\beta\in\mathfrak{I} \\ |\alpha|,|\beta| \ {\text odd}}}v_\alpha v_\beta \left(i_\alpha i_\beta + i_\beta i_\alpha\right) \\
    & = & 0.
\end{eqnarray*}
\end{proof}

\begin{remark}
Even though Proposition \ref{odd-square} refers to the case where $v$ is an odd supernumber in $\Lambda$, its result is still valid when considering closures of $\Lambda$, i.e., when the set $\mathfrak{I}$ has an infinite number of elements.
\end{remark}

\begin{proposition}
Let $N\in\mathbb{N}$ and consider $N+1$ elements $z_n\in\Lambda_N$ such that ${z_n}_B=0$ for every $n\in{1,\hdots,N+1}$. Then,
\[
\prod_{n=1}^{N+1} z_n = 0.
\]
In particular,
\[
z_S^{N+1} = 0
\]
for every $z=z_B+z_S\in\Lambda_N$.
\end{proposition}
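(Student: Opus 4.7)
The plan is to expand the product multilinearly and use a pigeonhole argument on the generators. Since each $z_n$ has vanishing body, we can write
\[
z_n = \sum_{\alpha\in\mathfrak{I}}z_{n,\alpha}i_\alpha,
\]
where the sum ranges only over $\alpha\in\mathfrak I$ (i.e.\ with $|\alpha|\ge 1$) and, because $z_n\in\Lambda_N$, only over $\alpha$ whose entries lie in $\{1,\ldots,N\}$. Expanding the product yields
\[
\prod_{n=1}^{N+1} z_n=\sum_{\alpha_1,\ldots,\alpha_{N+1}\in\mathfrak I} z_{1,\alpha_1}\cdots z_{N+1,\alpha_{N+1}}\,i_{\alpha_1}i_{\alpha_2}\cdots i_{\alpha_{N+1}}.
\]
It therefore suffices to show that each monomial $i_{\alpha_1}i_{\alpha_2}\cdots i_{\alpha_{N+1}}$ vanishes.

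Next I would count generators. Since $|\alpha_k|\ge 1$ for every $k=1,\ldots,N+1$, the concatenated list of generator indices appearing in $i_{\alpha_1}\cdots i_{\alpha_{N+1}}$ has length at least $N+1$, while every such index belongs to the set $\{1,\ldots,N\}$. By the pigeonhole principle some index $u\in\{1,\ldots,N\}$ must appear at least twice. Using the anticommutativity relation \eqref{anticommutative}, one can permute the generators in the product to place the two occurrences of $i_u$ adjacent, at the cost of an overall sign. Since $i_u^2=0$, the monomial is zero. As this holds for every term in the expansion, $\prod_{n=1}^{N+1}z_n=0$.

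For the second assertion, given $z=z_B+z_S\in\Lambda_N$, simply apply the first part with $z_1=z_2=\cdots=z_{N+1}=z_S$, which belongs to $\Lambda_N$ and has vanishing body by construction; hence $z_S^{N+1}=0$. There is no real obstacle here: the only subtlety is recognizing that $z_n\in\Lambda_N$ forces all appearing generators to come from the finite set $\{i_1,\ldots,i_N\}$, which is precisely what makes the pigeonhole argument work — in a closure of $\Lambda$ with infinitely many generators, the analogous statement would fail.
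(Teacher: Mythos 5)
Your proof is correct. The paper itself omits any proof of this proposition (it is listed among the results described as ``easy''), and your argument --- multilinear expansion, the observation that each factor contributes at least one generator from the finite set $\{i_1,\ldots,i_N\}$, and the pigeonhole principle forcing a repeated generator that can be anticommuted into an adjacent pair with $i_u^2=0$ --- is exactly the standard argument the authors intend. Your closing remark that the statement fails for infinitely many generators correctly identifies why the hypothesis $z_n\in\Lambda_N$ is essential.
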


We note the following two corollaries, omitting the proof for the first one.

\begin{corollary}
Let $z\in\Lambda$ be such that $z_B=0$. Then, there exists $n=n(z)$ such that $z^{n(z)+1}=0$.
\label{nz}
\end{corollary}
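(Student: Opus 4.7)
The plan is to derive this corollary as an immediate consequence of the preceding proposition together with the convention $\Lambda=\cup_{N\in\mathbb{N}}\Lambda_N$ adopted in the excerpt. Concretely, because every element of $\Lambda$ lies in some $\Lambda_N$ (the paper explicitly departs from the formal-infinite-sum convention for precisely this reason), the corollary reduces to the nilpotency bound already established for the soul of an element of $\Lambda_N$.

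First, I would invoke the definition of $\Lambda$ to select $N\in\mathbb{N}$ with $z\in\Lambda_N$, and set $n(z)=N$ (one may take the smallest such $N$ if a canonical value is desired, but any admissible choice works). The hypothesis $z_B=0$ then forces $z=z_S$ inside that $\Lambda_N$, i.e., $z$ is a finite sum
\[
z=\sum_{\substack{\alpha\in\mathfrak{I}\\ |\alpha|\geq 1}} z_\alpha i_\alpha
\]
with all entries of each $\alpha$ drawn from $\{1,\ldots,N\}$. Second, I would apply the preceding proposition to the trivial decomposition $z=z_B+z_S=0+z_S$; this yields $z_S^{N+1}=0$, hence
\[
z^{n(z)+1}=z_S^{N+1}=0,
\]
which is the desired conclusion.

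The only subtlety worth flagging, and the reason the statement requires any justification at all, is the implicit reliance on the paper's convention: in many references $\Lambda$ is treated via formal infinite sums and an element with vanishing body need not be nilpotent at any finite order. Under the union convention used here, however, $n(z)$ is automatically finite because $z$ involves only finitely many generators, so no genuine obstacle arises; the content of the statement is essentially packaged in the previous proposition.
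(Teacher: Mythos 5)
Your argument is correct and is precisely the intended one: the paper omits this proof as immediate, and the route you take -- pick $N$ with $z\in\Lambda_N$ via the union convention $\Lambda=\cup_{N}\Lambda_N$, note $z=z_S$, and apply the nilpotency bound $z_S^{N+1}=0$ from the preceding proposition -- is exactly what the statement is meant to follow from. Your remark about why the union convention matters (formal infinite sums would break finite nilpotency) is a sensible observation and does not change the proof.
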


\begin{corollary}
Let $z=z_B+z_S\in\Lambda$. Then, $z$ is invertible if and only if $z_B\neq0$.
\end{corollary}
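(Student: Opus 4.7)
The plan is to derive both implications from Corollary \ref{nz}: every supernumber with vanishing body is nilpotent. This single fact controls both directions.

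For the ``only if'' direction, suppose $z_B = 0$. Then $z = z_S$, and Corollary \ref{nz} gives an integer $n$ with $z^{n+1} = 0$. If $z$ admitted a right inverse $w \in \Lambda$, i.e.\ $zw = 1$, then multiplying on the left by $z^n$ would yield $z^n = z^{n+1} w = 0$. Iterating (multiply $zw = 1$ on the left successively by $z^{n-1}, z^{n-2}, \ldots, 1$) forces $z^{n-1} = \cdots = z = 0$, contradicting $zw = 1$. Hence a supernumber with vanishing body cannot be invertible.

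For the ``if'' direction, suppose $z_B \neq 0$. Since $z_B \in \mathbb{C}^{\times}$, factor $z = z_B(1 + y)$ where $y := z_B^{-1} z_S$ has vanishing body. By Corollary \ref{nz} there is an integer $m$ with $y^{m+1} = 0$, and the candidate inverse is the finite sum
\[
w := z_B^{-1} \sum_{k=0}^{m} (-1)^k y^k,
\]
which clearly lies in $\Lambda$. A telescoping computation, using only that powers of the single element $y$ commute with one another and with scalars, gives
\[
(1+y) \sum_{k=0}^{m} (-1)^k y^k \;=\; 1 + (-1)^m y^{m+1} \;=\; 1,
\]
and the symmetric calculation on the other side yields $zw = wz = 1$.

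The only real sanity check, rather than a genuine obstacle, is that $\Lambda$ is noncommutative, so ``invertible'' formally means possessing a two-sided inverse. The ``only if'' argument above already works from a one-sided inverse, which sidesteps the issue; the ``if'' argument is safe because the relevant products involve only powers of a single element $y$ together with scalars, which commute among themselves. No topology or completion is needed, since $n(z)$ is finite for each $z \in \Lambda$.
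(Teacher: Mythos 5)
Your proof is correct. For the ``if'' direction you take exactly the paper's route: factor $z=z_B(1+y)$ with $y=z_B^{-1}z_S$ of vanishing body, invoke Corollary \ref{nz} to make $y$ nilpotent, and invert $1+y$ by the finite geometric series; your telescoping verification and the remark that powers of a single element commute are fine. The only divergence is in the ``only if'' direction: the paper simply reads off the body of the equation $zw=1$, getting $z_Bw_B=1$ and hence $z_B\neq 0$ in one line (the body map $z\mapsto z_B$ is an algebra homomorphism to $\mathbb{C}$), whereas you argue by contradiction, using nilpotency of $z=z_S$ to force $z^n=z^{n-1}=\cdots=z=0$ from $zw=1$. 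Both are valid; the paper's version is shorter and more structural, while yours has the mild virtue of making explicit that even a one-sided inverse is ruled out (though the body-comparison argument yields that too). No gaps.
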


\begin{proof}
On the one hand, assume $z_B\neq0$. Then,
\[
z = z_B \left( 1+ \frac{z_S}{z_B}\right).
\]
According to Corollary \ref{nz}, there exists a $n(z)$ such that $z^{n(z)+1}=0$. Then,
\[
\left(1+z_B^{-1}z_S\right)^{-1} = \sum_{k=0}^{n(z)} \left(-z_B^{-1}z_S\right)^k
\]
and
\[
z^{-1} = z_B^{-1} \sum_{k=0}^{n(z)} \left(-z_B^{-1}z_S\right)^k.
\]

On the other hand, assume $z$ is invertible and let its inverse be $w=w_B+w_S\in\Lambda$. Then, $zw=1$ and, in particular,
\[
z_Bw_B=1\Rightarrow z_B\neq 0.
\]
\end{proof}

\subsection{Symmetries}

We can define a number of conjugations of a supernumber. We start with the one characterized by
\begin{gather*}
i_n^{\dag_1} = -i_n, \quad \forall n \\
\left(z w\right)^{\dag_1} = w^{\dag_1} z^{\dag_1}.
\end{gather*}
Therefore, the conjugation is defined as
\[
z^{\dag_1}=z_0+\sum_{\alpha\in\mathfrak I}z_\alpha i_\alpha^{\dag_1}.
\]

Let
\[
\pi(\alpha) = \frac{|\alpha|(|\alpha|-1)}{2}.
\]
Then, $i_\alpha^{\dag_1} = (-1)^{|\alpha| + \pi(\alpha)} i_\alpha$ and
\[
z^{\dag_1} = z_0+\sum_{\alpha\in\mathfrak I} (-1)^{|\alpha| + \pi(\alpha)} z_\alpha i_\alpha.
\]

In general, we have
\begin{align*}
\left(z+w\right)^{\dag_1} &= z^{\dag_1} + w^{\dag_1},\\
\left(z^{\dag_1}\right)^{\dag_1} &= z. \\
\end{align*}

The next conjugation is given by the conjugation of the complex coefficients. The Grassmann generators are invariant under it, i.e.,
\[
i_\alpha^{\dag_2} = i_\alpha.
\]
Therefore,
\[
z^{\dag_2}=\overline{z_0}+\sum_{\alpha\in\mathfrak I}\overline{z_\alpha} i_\alpha,
\]
where the overline represents the usual conjugation of a complex number. In general, it holds
\begin{align*}
\left(z+w\right)^{\dag_2} &= \left(z\right)^{\dag_2} + \left(w\right)^{\dag_2}\\
\left(z^{\dag_2}\right)^{\dag_2} &= z \\
\label{z1z2}
\left(z w\right)^{\dag_2} &= z^{\dag_2} w^{\dag_2}.
\end{align*}

The next conjugation is motivated by the fact already mentioned that $\Lambda$ is a $\mathbb{Z}_2$-graded algebra, i.e., the fact that an arbitrary supernumber $z\in\Lambda$ can be written as $z=u+v$, where $u\in\Lambda_{even}$ and $v\in\Lambda_{odd}$. We, then, define the conjugation $\dag_3$ in a way that $u^{\dag_3}=u$ and $v^{\dag_3} = -v$. Hence, it holds in general
\[
z^{\dag_3} = z_0 + \sum_{\alpha\in\mathfrak I}(-1)^{|\alpha|} z_\alpha i_\alpha.
\]

Observe that
\begin{align*}
\left(z+w\right)^{\dag_3} &= \left(z\right)^{\dag_3} + \left(w\right)^{\dag_3}\\
\left(z^{\dag_3}\right)^{\dag_3} &= z \\
\left(z w\right)^{\dag_3} &= \left(z\right)^{\dag_3} \left(w\right)^{\dag_3}.
\end{align*}

We note that the aforedefined conjugations commute with each other:
\begin{eqnarray*}
\left(z^{\dag_1}\right)^{\dag_2} & = & \left(z^{\dag_2}\right)^{\dag_1}, \\
\left(z^{\dag_2}\right)^{\dag_3} & = & \left(z^{\dag_3}\right)^{\dag_2}, \\
\left(z^{\dag_3}\right)^{\dag_1} & = & \left(z^{\dag_1}\right)^{\dag_3}.
\end{eqnarray*}

Therefore, four more conjugations can be defined:
\begin{align*}
z^{\dag_4} &\equiv \left(z^{\dag_1}\right)^{\dag_2} = \overline{z_0} + \sum_{\alpha\in\mathfrak I} (-1)^{|\alpha|+\pi(\alpha)} \overline{z_\alpha} i_\alpha, \\
z^{\dag_5} &\equiv \left(z^{\dag_2}\right)^{\dag_3} = \overline{z_0}+\sum_{\alpha\in\mathfrak I} (-1)^{|\alpha|} \overline{z_\alpha} i_\alpha, \\
z^{\dag_6} &\equiv \left(z^{\dag_3}\right)^{\dag_1} = z_0 + \sum_{\alpha\in\mathfrak I}(-1)^{\pi(\alpha)} z_\alpha i_\alpha, \\
z^{\dag_7} &\equiv \left(\left(z^{\dag_1}\right)^{\dag_2}\right)^{\dag_3} = \overline{z_0}+\sum_{\alpha\in\mathfrak I} (-1)^{\pi(\alpha)} \overline{z_\alpha} i_\alpha.
\end{align*}

We call special attention to $\dag_7$ because this is the conjugation that normally appears in the literature \cite{MR1172996}. Observe that it is very similar to $\dag_2$: it can be characterized as the complex conjugation of the coefficients and
\[
i_n^{\dag_7} = i_n.
\]
However, in general $i_\alpha^{\dag_7}\neq i_\alpha$. The reason for it is that given two supernumbers $z$ and $w$,
\[
\left(zw\right)^{\dag_7} = \left(w\right)^{\dag_7} \left(z\right)^{\dag_7}.
\]

Motivated by this conjugation, a {\it real supernumber} $z_R$ is defined as a supernumber with the property $\left(z_R\right)^{\dag_7}=z_R$. Analogously, a {\it imaginary supernumber} $z_C$ is a supernumber such that $\left(z_C\right)^{\dag_7}=-z_C$. With that, it is easy to see that a supernumber $z$ can be written as $z=z_R+z_C$.

\begin{remark}
The conjugations $\dag_i$ together with the identity $I$ form a commutative group with the composition law. In fact, it is the elementary abelian group $E_8$.
\end{remark}

\begin{remark}
It is clear that $zz^{\dag_i}$ (and, in particular, the modulus $\left|z\right|_i^2=zz^{\dag_i}$ induced by $\dag_i$), with $i\in\{1,2,\hdots,7\}$, is not a real number in general. To see it, first observe that $\dag_4$, $\dag_5$, $\dag_6$, and $\dag_7$ are just compositions of conjugations. Then, let $i\in\mathbb{C}$ be the complex unit and consider the examples $z=i_1 i_2 i_3 - i_4$, $w=ii_1+i_3$, and $r=1+i_1 i_2 + i_3$, which give
\[
zz^{\dag_1}=2i_1 i_2 i_3 i_4 \quad ww^{\dag_2}=2i i_1 i_2 \quad rr^{\dag_3}=1 + 2 i_1 i_2.
\]
\end{remark}

\begin{remark}
The choices of $z$ and $w$ made in the previous remark also show that
\[
zz^{\dag_1} \neq z^{\dag_1}z \quad ww^{\dag_2} \neq w^{\dag_2} w.
\]
\end{remark}

\begin{remark}
In spite of the previous remark, $zz^{\dag_3}=z^{\dag_3}z$ for every $z\in\Lambda$. In fact, $z=u+v$ and, using Proposition \ref{odd-square},
\[
zz^{\dag_3} = \left(u+v\right)\left(u-v\right)=u^2=\left(u-v\right)\left(u+v\right)=z^{\dag_3}z.
\]
\end{remark}

\begin{remark}
It is clear that the conjugations defined here are continuous in $\Lambda_N$ with respect to its natural topology induced by $\mathbb{C}^{2^N}$.
\end{remark}


\subsection{Norms and completions}
\label{comp123}

Because the conjugations defined above induce moduli that in general are not real, we introduce now the $p$-norm of a supernumber.

\begin{definition}
Let $p\geq 1$ be a real number. The $p$-norm of a supernumber $z\in\Lambda$ is defined as
\begin{equation}
\Vert z \Vert_p = \left(\sum_{\alpha\in\mathfrak{I}_0} |z_\alpha|^p \right)^{1/p},
\label{pnorm}
\end{equation}
where $|\cdot|$ is the usual modulus of a complex number.
\end{definition}

The above definition makes sense for any real $p\geq1$. However, for the purposes of this paper, we consider only $p\in\mathbb{N}$.

\begin{theorem}
Let $z,w\in\Lambda$. If $p=1$,
\begin{equation}
\Vert zw \Vert_1 \leq \Vert z \Vert_1 \Vert w \Vert_1.
\label{1norm-ineq}
\end{equation}
If $p> 1$,
\begin{equation}
\Vert zw \Vert_p^p \leq \Vert z \Vert_1^p \Vert w \Vert_{2^{p-1}} \prod_{k=1}^{p-1} \Vert w\Vert_{2^k}
\label{pnorm-ineq}
\end{equation}
and
\begin{equation}
\Vert zw \Vert_p^p \leq \Vert w \Vert_1^p \Vert z \Vert_{2^{p-1}} \prod_{k=1}^{p-1} \Vert z\Vert_{2^k}.
\label{pnorm-ineq2}
\end{equation}
\label{norm-ineq}
\end{theorem}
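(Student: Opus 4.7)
The plan is to first reduce to the case of non-negative coefficients, then settle $p = 1$ directly, and for $p \geq 2$ iterate Cauchy--Schwarz to obtain a clean ``power-of-two'' bound which is then interpolated via a generalized H\"older inequality to recover the theorem's exponents.

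Since $(zw)_\gamma = \sum_{\alpha \vee \beta = \gamma} (-1)^{\sigma(\alpha,\beta)} z_\alpha w_\beta$ and the relation $\alpha \vee \beta = \gamma$ is equivalent to $\alpha \subseteq \gamma$ (viewing the indices as sets) together with $\beta = \gamma \setminus \alpha$, the triangle inequality gives $|(zw)_\gamma| \leq a_\gamma := \sum_{\alpha \subseteq \gamma} |z_\alpha| \, |w_{\gamma \setminus \alpha}|$, so it suffices to bound $\sum_\gamma a_\gamma^p$. The case $p = 1$ is immediate upon interchanging the order of summation: $\sum_\gamma a_\gamma = \sum_{\alpha \cap \beta = \emptyset} |z_\alpha| |w_\beta| \leq \Vert z \Vert_1 \Vert w \Vert_1$.

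For $p \geq 2$ I introduce the auxiliary quantities $a_\gamma^{(k)} := \sum_{\alpha \subseteq \gamma} |z_\alpha| \, |w_{\gamma \setminus \alpha}|^{2^k}$, with $a_\gamma = a_\gamma^{(0)}$. Applying Cauchy--Schwarz to the splitting $|z_\alpha| \, |w_{\gamma \setminus \alpha}|^{2^k} = |z_\alpha|^{1/2} \cdot |z_\alpha|^{1/2} |w_{\gamma \setminus \alpha}|^{2^k}$ yields the key recursion $(a_\gamma^{(k)})^2 \leq \Vert z \Vert_1 \cdot a_\gamma^{(k+1)}$, which iterated produces $a_\gamma^{2^k} \leq \Vert z \Vert_1^{2^k - 1} a_\gamma^{(k)}$. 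Combined with the reordering identity $\sum_\gamma a_\gamma^{(k)} \leq \Vert z \Vert_1 \Vert w \Vert_{2^k}^{2^k}$, this gives the \emph{power-of-two estimate}
\[
\sum_\gamma a_\gamma^{2^k} \leq \Vert z \Vert_1^{2^k} \Vert w \Vert_{2^k}^{2^k}, \qquad k \geq 1.
\]

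For arbitrary $p \geq 2$ I factor $a_\gamma^p = a_\gamma \cdot a_\gamma \cdots a_\gamma \cdot a_\gamma^2$ into $p-1$ pieces ($p-2$ copies of $a_\gamma$ followed by one copy of $a_\gamma^2$) and apply the generalized H\"older inequality with exponents $(q_1, \ldots, q_{p-1}) = (2, 4, \ldots, 2^{p-2}, 2^{p-2})$, whose reciprocals sum to $(1 - 2^{-(p-2)}) + 2^{-(p-2)} = 1$. This produces
\[
\sum_\gamma a_\gamma^p \leq \prod_{k=1}^{p-2} \Bigl( \sum_\gamma a_\gamma^{2^k} \Bigr)^{1/2^k} \cdot \Bigl( \sum_\gamma a_\gamma^{2^{p-1}} \Bigr)^{1/2^{p-2}},
\]
and inserting the power-of-two estimate into each factor collapses the right-hand side to $\Vert z \Vert_1^p \Vert w \Vert_{2^{p-1}} \prod_{k=1}^{p-1} \Vert w \Vert_{2^k}$, as required. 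The companion inequality follows from the same argument after exchanging the roles of $z$ and $w$, since $a_\gamma$ is symmetric under $\alpha \leftrightarrow \gamma \setminus \alpha$. The main obstacle is finding both (i) the correct factorization of $a_\gamma^p$ and (ii) H\"older exponents $q_i$ that are themselves powers of $2$, so that each resulting sum matches the power-of-two estimate on the nose; a naive induction on $p$, writing $a_\gamma^p = a_\gamma \cdot a_\gamma^{p-1}$ and using Cauchy--Schwarz, fails because it yields $\sum a_\gamma^{2(p-1)}$ whose exponent grows rather than shrinks for $p \geq 3$.
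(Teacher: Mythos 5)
Your proof is correct: the reduction to $a_\gamma=\sum_{\alpha\subseteq\gamma}|z_\alpha|\,|w_{\gamma\setminus\alpha}|$, the Cauchy--Schwarz recursion $(a_\gamma^{(k)})^2\leq\Vert z\Vert_1\,a_\gamma^{(k+1)}$, the resulting power-of-two estimate $\sum_\gamma a_\gamma^{2^k}\leq\Vert z\Vert_1^{2^k}\Vert w\Vert_{2^k}^{2^k}$, and the generalized H\"older step with exponents $(2,4,\ldots,2^{p-2},2^{p-2})$ all check out, and the bookkeeping does collapse to exactly $\Vert z\Vert_1^p\Vert w\Vert_{2^{p-1}}\prod_{k=1}^{p-1}\Vert w\Vert_{2^k}$ (I verified $p=2,3,4$ explicitly; note the apparent circularity of $\sum_\gamma a_\gamma^{2^{p-2}}$ appearing on both sides for $p$ a power of two is harmless, since each H\"older factor is then bounded independently by the already-proved lemma). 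Your route is, however, organized quite differently from the paper's. The paper expands $|\sum_{\alpha\vee\beta=\gamma}z_\alpha w_\beta|^p$ into a $p$-fold multiple sum over tuples $(\alpha_1,\beta_1),\ldots,(\alpha_p,\beta_p)$, pulls the $z$-factors out to produce $\Vert z\Vert_1^p$, and then applies Cauchy--Schwarz sequentially to the constrained sum over the $\beta_k$'s, peeling off one factor $\Vert w\Vert_{2^k}$ at each stage while squaring the exponents of the remaining ones, the doubled factor $\Vert w\Vert_{2^{p-1}}$ emerging at the final peel. You instead isolate a self-contained single-exponent lemma by a ``vertical'' Cauchy--Schwarz recursion on the coefficient $a_\gamma$ itself, and then recover general $p$ by interpolation with power-of-two H\"older exponents. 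What your organization buys is transparency: the power-of-two case is a clean standalone estimate (indeed, for $p=2^k$ it gives the bound $\Vert z\Vert_1^p\Vert w\Vert_p^p$, not comparable to but simpler than the theorem's), and the paper's nested constrained sums, whose notation is genuinely hard to parse, are replaced by a standard interpolation argument; the cost is invoking the generalized H\"older inequality, which the paper avoids. Your symmetry argument for \eqref{pnorm-ineq2} via $\alpha\leftrightarrow\gamma\setminus\alpha$ is also fine and matches the paper's ``follows in a similar manner''.
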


\begin{proof}
The proof of \eqref{1norm-ineq} is straightforward and is, then, omitted.

To start the proof of \eqref{pnorm-ineq}, we note that
\begin{eqnarray*}
\Vert zw \Vert_p^p & = & \left\Vert\sum_{\gamma\in\mathfrak{I}_0} \sum_{\alpha\vee\beta=\gamma} (-1)^{\sigma(\alpha,\beta)} z_\alpha w_\beta i_\gamma \right\Vert_p^p \\
     & = & \sum_{\gamma\in\mathfrak{I}_0} \left|\sum_{\alpha\vee\beta=\gamma} (-1)^{\sigma(\alpha,\beta)} z_\alpha w_\beta\right|^p \\
     & \leq & \sum_{\gamma\in\mathfrak{I}_0} \sum_{k=1}^p \sum_{\alpha_k\vee\beta_k=\gamma} \left|z_{\alpha_1}\right| \cdots \left|z_{\alpha_p}\right| \left|w_{\beta_1}\right| \cdots \left|w_{\beta_p}\right| \\
     & \leq & \sum_{\alpha_1,\cdots,\alpha_p\in\mathfrak{I}_0} \left|z_{\alpha_1}\right| \cdots \left|z_{\alpha_p}\right| \sum_{k=1}^p \sum_{\substack{\gamma\in\mathfrak{I}_0; \exists \beta_k\in\mathfrak{I}_0 \\ \alpha_k\vee\beta_k=\gamma}} \left|w_{\beta_1}\right| \cdots \left|w_{\beta_p}\right|.
\end{eqnarray*}
Now, using the Cauchy-Schwarz inequality,
\begin{eqnarray*}
\sum_{k=1}^p \sum_{\substack{\gamma\in\mathfrak{I}_0; \exists \beta_k\in\mathfrak{I}_0 \\ \alpha_k\vee\beta_k=\gamma}} \left|w_{\beta_1}\right| \cdots \left|w_{\beta_p}\right| & \leq & \left(\sum_{k=1}^{p-1} \sum_{\substack{\gamma\in\mathfrak{I}_0; \exists \beta_k\in\mathfrak{I}_0 \\ \alpha_k\vee\beta_k=\gamma}} \left(\left|w_{\beta_1}\right| \cdots \left|w_{\beta_{p-1}}\right|\right)^2\right)^{1/2} \left(\sum_{\substack{\gamma\in\mathfrak{I}_0 \\ \alpha_p\vee\beta_p}}|w_{\beta_p}|^2\right)^{1/2} \\
    & \leq & \left(\sum_{k=1}^{p-2} \sum_{\substack{\gamma\in\mathfrak{I}_0; \exists \beta_k\in\mathfrak{I}_0 \\ \alpha_k\vee\beta_k=\gamma}} \left(\left|w_{\beta_1}\right| 
\cdots \hspace{-2mm}
\left|w_{\beta_{p-2}}\right|\right)^4\right)^{1/4} \times \\
     &&\hspace{5mm}\times\left(\sum_{\substack{\gamma\in\mathfrak{I}_0 \\ \alpha_{p-1}\vee\beta_{p-1}}} |w_{\beta_{p-1}}|^4\right)^{1/4} \left(\sum_{\beta_p\in\mathfrak{I}_0}|w_{\beta_p}|^2\right)^{1/2} \\
     & \leq & \left(\sum_{k=1}^{p-2} \sum_{\substack{\gamma\in\mathfrak{I}_0; \exists \beta_k\in\mathfrak{I}_0 \\ \alpha_k\vee\beta_k=\gamma}} \left(\left|w_{\beta_1}\right| \cdots \left|w_{\beta_{p-2}}\right|\right)^4\right)^{1/4} \hspace{-2mm}\left(\sum_{\beta_{p-1}} |w_{\beta_{p-1}}|^4\right)^{1/4}\hspace{-3mm} \Vert w \Vert_2 \\
     & \leq & \left(\sum_{k=1}^{p-n} \sum_{\substack{\gamma\in\mathfrak{I}_0; \exists \beta_k\in\mathfrak{I}_0 \\ \alpha_k\vee\beta_k=\gamma}} \left(\left|w_{\beta_1}\right| \cdots \left|w_{\beta_{p-n}}\right|\right)^{2^n}\right)^{1/2^n} \prod_{k=1}^{n} \Vert w \Vert_{2^k} \\
     & \leq & \left(\sum_{k=1}^2 \sum_{\substack{\gamma\in\mathfrak{I}_0; \exists \beta_k\in\mathfrak{I}_0 \\ \alpha_k\vee\beta_k=\gamma}} \left(\left|w_{\beta_1}\right| \left|w_{\beta_2}\right|\right)^{2^{p-2}}\right)^{1/2^{p-2}} \prod_{k=1}^{p-2} \Vert w \Vert_{2^k} \\
     & \leq & \Vert w\Vert_{2^{p-1}} \prod_{k=1}^{p-1} \Vert w \Vert_{2^k}.
\end{eqnarray*}
Therefore,
\begin{eqnarray*}
\Vert zw \Vert_p^p & \leq & \sum_{\alpha_1,\cdots,\alpha_p\in\mathfrak{I}_0} \left|z_{\alpha_1}\right| \cdots \left|z_{\alpha_p}\right| \Vert w\Vert_{2^{p-1}} \prod_{k=1}^{p-1} \Vert w \Vert_{2^k} \\
     & \leq & \left(\sum_{\alpha\in\mathfrak{I}_0}|z_\alpha|\right)^p \Vert w\Vert_{2^{p-1}} \prod_{k=1}^{p-1} \Vert w \Vert_{2^k} \\
     & \leq & \Vert z\Vert_1^p \Vert w\Vert_{2^{p-1}} \prod_{k=1}^{p-1} \Vert w \Vert_{2^k},
\end{eqnarray*}
proving \eqref{pnorm-ineq}. The proof of \eqref{pnorm-ineq2} follows in a similar manner.
\end{proof}

\begin{remark}
The inequalities presented in Theorem \ref{norm-ineq} are homogeneous.
\end{remark}

\begin{remark}
The inequalities \eqref{pnorm-ineq} and \eqref{pnorm-ineq2} are not enough to state that, in the closure $\overline{\Lambda}^{(p)}$ of $\Lambda$ with respect to the $p$-norm, the product of elements is a law of composition, i.e., if $z,w\in\overline{\Lambda}^{(p)}$, the aforementioned inequalities do not guarantee that the product converges in $\overline{\Lambda}^{(p)}$. In this paper, we are particular interested in $\overline{\Lambda}^{(2)}$. However, we were unable to show whether the product is or is not a law of composition in this space. Although the solution to this question is important and can reveal interesting aspects of $\overline{\Lambda}^{(2)}$, the main results we present are independent of it.
\end{remark}

\begin{remark}
In \cite{MR1099318}, a Fr\'echet structure modeled on a therein defined sequence space is given to $\Lambda$ by endowing it with seminorms.
\end{remark}

Up to now, besides the last remarks, we only considered algebraic properties of $\Lambda$. To study convergence problems, it is necessary to complete this set. The completion with respect to the $1$-norm, i.e., the space $\overline{\Lambda}^{(1)}$ has a Banach algebra structure and is already known in the literature. It was introduced by Rogers \cite{rogers1980global, zbMATH00861741}. In the next section, we study $\overline{\Lambda}^{(2)}$.


\section{The Fock space}
\setcounter{equation}{0}
\label{fock-space}

The classical Fock space associated with $\ell_2$, i.e., the reproducing kernel
Hilbert space introduced by Bargmann with reproducing kernel \cite{bargmann}
\begin{equation}
\label{barg}
e^{\langle z,w\rangle_{\ell_2}}=\sum_{\alpha\in\ell}\frac{z^\alpha \overline{w}^{\alpha}}{\alpha!}
\end{equation}
corresponds to function theory in a (countably) infinite number of commuting 
complex variables. In \eqref{barg} we have used the several complex variables notation and set $z=(z_1,z_2,\ldots)\in\ell_2(\mathbb N)$, $\ell$ to be the family of sequences
\begin{equation}
\alpha=(\alpha_1,\alpha_2,\ldots), \qquad \alpha_j\in\mathbb{N}_0
\label{index-commut}
\end{equation}
where at most a finite number of $\alpha_j$ are different from $0$,
\[
z^\alpha=z_1^{\alpha_1}z_2^{\alpha_2}\cdots,
\]
and $\alpha!=\alpha_1!\alpha_2!\cdots$.

In the noncommutative setting, the commuting variables give place to noncommuting ones and, then, one needs a different set of indexes $\widetilde{\ell}$. We consider $\alpha\in\widetilde{\ell}$ given by
\begin{equation}
\alpha = ((\alpha_1,n_1),(\alpha_2,n_2),\hdots,(\alpha_m,n_m)),
\label{index-noncommut}
\end{equation}
where $n_m\in\mathbb{N}$ and $\alpha_u\neq\alpha_{u+1}$, for $u=1,\hdots,m-1$. Hence, the new kernel is written as
\[
\sum_{\alpha\in\widetilde{\ell}}z^\alpha \overline{w^\alpha},
\]
where here $\overline{w^\alpha}=\cdots \overline{w_2}^{n_2} \overline{w_1}^{n_1}$.

Note that the left concatenation gives $\widetilde{\ell}$ a monoid structure. Furthermore, it defines a partial order as follows: For $\alpha, \beta\in\widetilde{\ell}$ we say that $\beta\le \alpha$ if there is $\gamma\in\widetilde{\ell}$ such that $\alpha=\beta\gamma$.\smallskip

The parallel with the structure of the Grassmann algebra is clear. Motivated by it, we give an inner product to $\overline{\Lambda}^{(2)}$ envisioning the construction of the counterpart of the Fock space here.

\begin{definition}
The inner product $\langle \cdot,\cdot \rangle$ between two supernumbers $z,w\in\Lambda$ is defined as
\[
\langle z,w\rangle = \sum_{\alpha\in\mathfrak{I}_0} z_\alpha \overline{w_\alpha}.
\]
\end{definition}

\begin{remark}
Let $z$ and $w$ be two supernumbers. Then,
\[
\langle w,z\rangle = \langle z^{\dag_2},w^{\dag_2}\rangle = \langle z^{\dag_4},w^{\dag_4}\rangle = \langle z^{\dag_5},w^{\dag_5}\rangle = \langle z^{\dag_7},w^{\dag_7}\rangle.
\]
\end{remark}

\begin{proposition}
$\Lambda$ endowed with the aforementioned defined inner product and the $2$-norm is a Hilbert space.
\end{proposition}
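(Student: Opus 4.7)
The strategy is to reduce the statement to the standard fact that $\ell^2$ of a countable index set is a Hilbert space, via the coordinate isomorphism induced by the basis $\{i_\alpha\}_{\alpha \in \mathfrak{I}_0}$. Before anything else, I would flag a terminological point: as defined, $\Lambda = \cup_N \Lambda_N$ consists only of finite sums, so it is not literally complete (for instance, $z_n = \sum_{k=1}^n \frac{1}{k} i_k$ is Cauchy in the $2$-norm but has no limit in $\Lambda$). I will therefore read the proposition as asserting that the pre-Hilbert structure defined on $\Lambda$ extends to a Hilbert space structure on the completion $\overline{\Lambda}^{(2)}$.

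First, I would verify that $\langle\cdot,\cdot\rangle$ is genuinely an inner product on $\Lambda$. For any $z = \sum_\alpha z_\alpha i_\alpha$ and $w = \sum_\alpha w_\alpha i_\alpha$ in $\Lambda$, the sum $\sum_{\alpha \in \mathfrak{I}_0} z_\alpha \overline{w_\alpha}$ has only finitely many nonzero terms and is therefore well-defined. Conjugate symmetry and sesquilinearity transfer immediately from the scalar inner product on $\mathbb{C}$. Positive definiteness comes from $\langle z, z \rangle = \sum_\alpha |z_\alpha|^2 \ge 0$, with equality forcing each coefficient to vanish; here one uses that the family $\{i_\alpha\}_{\alpha \in \mathfrak{I}_0}$ is, by construction of $\Lambda$, linearly independent, so the expansion \eqref{Grassmann-decomposition} is unique. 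The induced norm $\sqrt{\langle z, z\rangle}$ then coincides exactly with the $2$-norm of Definition 2.6.

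Second, I would produce the completeness statement through a coordinate isomorphism. The assignment $z = \sum_\alpha z_\alpha i_\alpha \mapsto (z_\alpha)_{\alpha \in \mathfrak{I}_0}$ is a linear isometry from $(\Lambda, \Vert \cdot \Vert_2)$ into $\ell^2(\mathfrak{I}_0)$, and its image is precisely the subspace of finitely supported sequences, which is well known to be dense in $\ell^2$ of a countable set. Passing to the metric completion therefore produces an isometric isomorphism $\overline{\Lambda}^{(2)} \cong \ell^2(\mathfrak{I}_0)$. Transporting the inner product back along this isomorphism, one identifies $\overline{\Lambda}^{(2)}$ with the Hilbert space of formal sums $\sum_{\alpha \in \mathfrak{I}_0} z_\alpha i_\alpha$ satisfying $\sum_\alpha |z_\alpha|^2 < \infty$, giving an honest meaning to the ``possibly infinite'' decompositions mentioned after \eqref{Grassmann-decomposition}.

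There is no real technical obstacle here: once the index set $\mathfrak{I}_0$ is recognized as countable and the coordinate map is set up, everything reduces to the canonical $\ell^2$-construction. The only subtlety worth being explicit about is the distinction between $\Lambda$ (finite sums) and $\overline{\Lambda}^{(2)}$ (square-summable formal sums), and the remark that although the product was shown in Section 2 to obey the V\r{a}ge-type bounds \eqref{1norm-ineq}--\eqref{pnorm-ineq2}, those do not enter the present proof, which concerns only the Hilbert space structure.
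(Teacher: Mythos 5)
Your proposal is correct, and its core observation --- that the $2$-norm is the norm induced by the inner product --- is exactly the content of the paper's proof, which consists of the single line $\Vert z\Vert_2 \equiv \langle z,z\rangle^{1/2}$. Where you go further is in addressing completeness: you correctly point out that $\Lambda=\cup_N\Lambda_N$ as defined contains only finite sums and is therefore merely a pre-Hilbert space, and you supply the missing step by identifying $(\Lambda,\Vert\cdot\Vert_2)$ isometrically with the finitely supported sequences in $\ell^2(\mathfrak I_0)$, so that the completion $\overline{\Lambda}^{(2)}$ is the full $\ell^2(\mathfrak I_0)$. The paper leaves this entirely implicit (its terminology is loose here; the Hilbert space it actually works with afterwards is $\overline{\Lambda}^{(2)}$ of Definition \ref{defdef}, not $\Lambda$ itself), so your reading of the proposition as a statement about the completion is the right one, and your explicit coordinate isomorphism is what gives honest meaning to the infinite decompositions promised after \eqref{Grassmann-decomposition}. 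In short: same idea, but your version is the complete argument and the paper's is a one-line gesture toward it.
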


\begin{proof}
It suffices to observe that the $2$-norm is induced by the above defined inner product. In fact,
\[
\Vert z\Vert_2 \equiv \langle z, z\rangle^{1/2}.
\]
\end{proof}

\begin{definition}
\label{defdef}
By analogy with the noncommutative setting, as discussed in the beginning of this section (and also noting the definition in \cite{MR1217253}), $\overline{\Lambda}^{(2)}$ is called the Fock space.
\end{definition}

In general, we consider functions $f:\mathbb{\mathcal{I}}\rightarrow\overline{\Lambda}^{(2)}$, where $\mathcal{I}$ is the domain of $f$, usually $\mathbb{C}$ or $\mathbb{R}$. Whenever we write $f\in\overline{\Lambda}^{(2)}$, we mean that $f(x)\in\overline{\Lambda}^{(2)}$ for every $x\in\mathcal{I}$.\smallskip

As in the classical examples of Fock space, it is possible to define the left multiplication operation $M_f$ in $\overline{\Lambda}^{(2)}$. But there is a caveat: as already discussed, the multiplication might not be a law of composition in $\overline{\Lambda}^{(2)}$. If that is the case, $M_f$ is unbounded for an arbitrary $f\in\overline{\Lambda}^{(2)}$. One can, then, use two different approaches: study such operators in a space of stochastic distributions $\mathfrak{S}_{-1}$, as we discuss in the next section, or restrict $M_f$ to $f\in\overline{\Lambda}^{(1)}\subset\overline{\Lambda}^{(2)}$ since Theorem \ref{norm-ineq} assures $M_f$ is bounded in this case, i.e., if $f\in\overline{\Lambda}^{(1)}\subset\overline{\Lambda}^{(2)}$ and $g\in\overline{\Lambda}^{(2)}$, we have
\[
M_f g = fg \in\overline{\Lambda}^{(2)}.
\]

In this section, we consider only $M_f$ with $f\in\overline{\Lambda}^{(1)}$. When we get back to it in Section \ref{stocder}, we consider the general case where $f\in\overline{\Lambda}^{(2)}$ since the product is a law of composition in the space of stochastic distributions, which is introduced in the next section.

Note that, if $f=\sum_{\alpha\in\mathfrak{I}_0}f_\alpha i_\alpha$,
\begin{equation}
M_f = \sum_{\alpha\in\mathfrak{I}_0}f_\alpha M_{i_\alpha}.
\end{equation}
Therefore, we can focus our analysis on multiplication by elements of $\mathfrak{I}_0$. Furthermore, we note that the focus can be on the set $\mathfrak{I}$, since multiplication by the generator of the body, i.e., $M_1$ is just the identity operator.

In the case of elements associated with $\mathfrak{I}_0$, we already know that
\[
i_\alpha i_\beta = (-1)^{\sigma(\alpha,\beta)} \sum_{\eta\in\mathfrak{I}} \delta_{\alpha\vee\beta,\eta} i_\eta.
\]
It is also straightforward from the definition of the inner product that
\[
\langle i_\alpha, i_\beta\rangle = \delta_{\alpha,\beta}.
\]
Then,
\begin{equation}
\langle M_{i_\alpha} i_\beta, i_\gamma \rangle = (-1)^{\sigma(\alpha,\beta)} \sum_{\eta\in\mathfrak{I}} \delta_{\alpha\vee\beta,\eta} \langle i_\eta, i_\gamma \rangle = (-1)^{\sigma(\alpha,\beta)} \delta_{\alpha\vee\beta,\gamma}.
\end{equation}

We can also look for an expression for the adjoint $M^*_{i_\alpha}$ of $M_{i_\alpha}$:
\begin{equation}
\langle M^*_{i_\alpha} i_\beta, i_\gamma\rangle = \langle i_\beta, M_{i_\alpha}i_\gamma\rangle = (-1)^{\sigma(\alpha,\gamma)} \delta_{\beta,\alpha\vee\gamma}.
\end{equation}
By analogy with the classical cases, we say $M^*_{i_\alpha}$ is the left derivative with respect to $i_\alpha$.

Observing that 
\[
M_{i_\alpha} = M_{i_{a_1}} M_{i_{a_2}} \hdots M_{i_{a_t}}
\]
and, as a consequence,
\begin{equation}
M_{i_\alpha}^* = M_{i_{a_t}}^* M_{i_{a_{t-1}}}^* \hdots M_{i_{a_1}}^*,
\label{left-deriv}
\end{equation}
we can pay close attention to the left derivative with respect to single generators. Then,
\[
M^*_{i_n} i_\alpha = \left\{
  \begin{array}{l l}
    0, & \text{if } a_k\neq n, \forall a_k\in\{1,\cdots,|\alpha|\} \\
    (-1)^{k-1} i_{a_1}i_{a_2}\cdots i_{a_{k-1}} i_{a_{k+1}}\cdots i_{a_{|\alpha|}}, & \text{if } \exists k; a_k=n
  \end{array}
\right.
\]
Therefore, the left derivative constructed here corresponds to the one that is traditionally defined in superanalysis and its applications \cite{berezin1979method, MR914369, MR1172996, zbMATH00861741}.

\begin{remark}
Expression \eqref{left-deriv} clearly shows that the operator conjugation $*$ is an extension of the conjugation of supernumbers $\dag_7$.
\end{remark}

\begin{remark}
The Berezin integral is a concept widely used in superanalysis and supersymmetry \cite{berezin1979method, MR914369, MR1172996, zbMATH00861741}. It coincides with the left derivative and, then, can be also defined in terms of $M^*_{i_n}$:
\[
\int di_{n} f \equiv M_{i_n}f.
\]
More generally, if $|\alpha|<\infty$,
\[
\int di_{\alpha} f \equiv \int di_{a_{|\alpha|}} \cdots\int i_{a_1} f = M^*_{i_{a_{|\alpha|}}}\cdots M^*_{i_{a_1}} f.
\]
Moreover, if $f$ is generated by $i_{a_1},\hdots,i_{a_N}$ and $i_\alpha=i_{a_1}\hdots i_{a_N}$, the Berezin integral
\[
\int di_{\alpha} f = f_{1,2,\hdots,N}
\]
reduces to
\[
\int di_{\alpha} f = \langle M_f 1, i_{\alpha}\rangle,
\]
which has some resemblance to a residue.
\end{remark}

It is also worth considering the self-adjoint operator
\begin{equation}
T_f = M_f + M_f^*.
\label{operator}
\end{equation}

\begin{proposition}
Let $f,g\in\overline{\Lambda}^{(2)}$ with $f_B=g_B=0$, then the following holds:
\[
\langle T_f 1, T_g 1\rangle = \langle f, g\rangle.
\]
\label{ipr}
\end{proposition}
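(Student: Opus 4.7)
The plan is to expand $\langle T_f 1, T_g 1\rangle$ using bilinearity and then exploit the hypothesis $f_B = g_B = 0$ to kill three of the four resulting terms. Writing
\[
\langle T_f 1, T_g 1\rangle = \langle M_f 1, M_g 1\rangle + \langle M_f 1, M_g^* 1\rangle + \langle M_f^* 1, M_g 1\rangle + \langle M_f^* 1, M_g^* 1\rangle,
\]
the goal reduces to showing $M_f^* 1 = 0$ (and, by symmetry, $M_g^* 1 = 0$), since then the last three terms disappear and the first term is exactly $\langle f, g\rangle$ because $M_f 1 = f$ and $M_g 1 = g$ by definition of left multiplication.

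The core computation is therefore $M_{i_\alpha}^* 1 = 0$ for every $\alpha \in \mathfrak{I}$ (i.e., $|\alpha| \geq 1$). I would establish this using the adjoint identity already derived in the section: for any $\gamma \in \mathfrak{I}_0$,
\[
\langle M_{i_\alpha}^* 1, i_\gamma\rangle = \langle 1, M_{i_\alpha} i_\gamma\rangle = (-1)^{\sigma(\alpha,\gamma)}\,\langle 1, i_{\alpha\vee\gamma}\rangle.
\]
Since $|\alpha| \geq 1$, the product $i_\alpha i_\gamma$ is either zero or a scalar multiple of some $i_\eta$ with $|\eta| \geq 1$, so $\alpha \vee \gamma \neq 0$ whenever it is defined, and the inner product $\langle 1, i_{\alpha\vee\gamma}\rangle$ vanishes. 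As this holds for all $\gamma$, the vector $M_{i_\alpha}^* 1$ is orthogonal to the orthonormal basis $\{i_\gamma\}_{\gamma \in \mathfrak{I}_0}$, hence is zero. (Alternatively, one can read this directly off the explicit formula for the left derivative given just before Remark on the Berezin integral: applying $M_{i_n}^*$ to $1$ gives $0$ since $n$ is not among the generators of $1$, and the factorization $M_{i_\alpha}^* = M_{i_{a_t}}^* \cdots M_{i_{a_1}}^*$ then yields $M_{i_\alpha}^* 1 = 0$.)

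Assembling: write $f = \sum_{\alpha \in \mathfrak{I}} f_\alpha i_\alpha$ (only $\alpha \in \mathfrak{I}$ appears because $f_B = 0$), so
\[
M_f^* 1 = \sum_{\alpha \in \mathfrak{I}} \overline{f_\alpha}\, M_{i_\alpha}^* 1 = 0,
\]
and identically $M_g^* 1 = 0$. Substituting into the expansion above yields
\[
\langle T_f 1, T_g 1\rangle = \langle M_f 1, M_g 1\rangle = \langle f, g\rangle,
\]
which is the claim. I do not anticipate any real obstacle; the statement of Proposition \ref{ipr} is essentially a clean bookkeeping consequence of the fact that constants are annihilated by left derivatives, and the hypothesis $f_B = g_B = 0$ is precisely what is needed to discard the three mixed/annihilation terms. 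If one wished, a slight subtlety would be to justify passing the infinite sum defining $M_f^*$ through the inner product in $\overline{\Lambda}^{(2)}$, but this is immediate since $f \in \overline{\Lambda}^{(2)}$ guarantees $\ell^2$-summability of the coefficients and $M_{i_\alpha}^* 1 = 0$ term-by-term.
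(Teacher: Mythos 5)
Your proof is correct and takes essentially the same route as the paper's: both arguments reduce to the single observation that $M_f^*1$ equals the body of $f$ (hence vanishes under the hypothesis $f_B=0$), so that $T_f1=f$, $T_g1=g$, and the inner product collapses to $\langle f,g\rangle$. The only difference is that the paper simply asserts $M_f^*1=f_B$ while you verify it via $\langle M_{i_\alpha}^*1,i_\gamma\rangle=\langle 1, i_\alpha i_\gamma\rangle=0$ for $|\alpha|\ge 1$, a correct and harmless amount of extra detail.
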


\begin{proof}
First, note that $M_f^* 1 = f_B$ and $M_g^* 1 = g_B$. Then, if $f_B=g_B=0$,
\[
\langle T_f 1, T_g 1\rangle = \langle f + f_B, g + g_B\rangle = \langle f, g \rangle.
\]
\end{proof}

Such an operator $T_f$ is used later in this paper when we define stochastic processes associated with the Fock space.


\section{A topological algebra associated with $\Lambda$}
\setcounter{equation}{0}
\label{topological-algebra}

Let $\mathscr S$ denote the space of Schwartz functions (the space of test functions), and let  $\mathscr S^\prime$  be its dual (the space of tempered distributions). The Gel\cprime fand triple $(\mathscr S,\mathbf L_2(\mathbb R,dx),\mathscr S^\prime)$ plays an important role in classical analysis, see for instance \cite{MR0209834}. Gel\cprime fand triples are also defined in Hida's white noise space theory \cite{MR1408433,MR1387829} and in its noncommutative counterpart \cite{MR3231624,MR3038506}, and are used to solve stochastic differential equations and model stochastic processes and their derivatives. In this section we define Gel\cprime fand triples in the Grassmann setting, with the aim of solving similar problems.\\

One of the reasons commonly used for introducing a Gel\cprime fand triple is the fact that in a Hilbert space, one can define products different from the inner product between two elements and, often, those products are not a law of composition on such a space, i.e., the result of the product of two elements does not necessarily belong to the Hilbert space. For instance, in the white noise space, the Wick product is not a law of composition. Then, one embeds the white noise space into a space of stochastic distributions to make it a law of composition. Various choices are possible to do so.

In our case, as already discussed, it is not yet clear if the product is a law of composition in $\overline{\Lambda}^{(2)}$. So one could question if it is necessary to make use of an analogous embedding. However, making the product a law of composition is not the only reason to introduce the space of stochastic distributions. In fact, when considering stochastic processes, such spaces are necessary for the study of their derivatives. Because of it, we introduce in this section the analogous of the space of stochastic distributions. Before doing so, we review a few facts from the classical case as well as from the theory of perfect spaces and strong algebras. We refer the reader to \cite{MR0435834, GS2_english} for more information on these spaces.\smallskip

Our starting point is a decreasing family of Hilbert spaces $(\mathcal H_p,\|\cdot\|_{\mathcal{H}_p})_{p\in\mathbb Z}$, with increasing norms. The intersection $\mathcal F=\cap_{p=0}^\infty \mathcal H_p$ is a Fr\'echet space, which we assume perfect, meaning that compactness is equivalent to being bounded and compact. This will happen in particular when for every $p$ there exists $q>p$ such that the injection map from ${\mathcal H_q}$ into ${\mathcal H_p}$ is compact. An important instance is when the injection is nuclear. We identify ${\mathcal H}_p^\prime$ with ${\mathcal H}_{-p}$. Our main interest is the dual ${\mathcal F}^\prime
=\cup_{p=0}^\infty{\mathcal H}_{-p}$, which together with $\mathcal F$ and $\mathcal H_0$, for a Gel\cprime fand triple.\\

We endow the dual $\mathcal F^\prime$ with the strong topology, defined in terms of the bounded sets of $\mathcal F$. The space $\mathcal F^\prime$ is then locally convex, and the strong topology coincides with the inductive limit topology. See \cite[Section 3]{MR3029153} for a discussion.\\

Therefore, the analysis is, {\sl a priori}, done in a larger space of distributions, which is a (non-metrizable) inductive limit of Hilbert spaces (distributions here being understood as continuous functionals on the space $\mathcal F^\prime$). However, is in fact done locally in a Hilbert space. There are two reasons why this happens. The first reason is that the space of distributions is the dual of a Fr\'echet nuclear space (dual of a perfect space would suffice). The second reason is the algebra structure of $\mathfrak{S}_{-1}$, and V\r{a}ge inequality \eqref{vage_ineq}. We now state two of the main results related to such spaces. 
They are used in the proofs of Theorems \ref{stochasticintegral} and \ref{thm55}.


\begin{proposition}
\label{compact}
A set is (weakly or strongly) compact in $\mathcal F^\prime$ if and only if it is compact in one of the spaces ${\mathcal H}_{-p}$ in the corresponding norm.
\end{proposition}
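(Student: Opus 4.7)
The plan is to establish the two directions separately, the harder one being that a compact subset of $\mathcal F^\prime$ must already sit compactly inside one of the Hilbert-space steps $\mathcal H_{-p}$.

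For the easy direction, if $K\subset \mathcal H_{-p}$ is compact in the norm topology of $\mathcal H_{-p}$, the canonical injection $\mathcal H_{-p}\hookrightarrow \mathcal F^\prime$ is continuous by the very definition of the strong (equivalently, inductive-limit) topology on $\mathcal F^\prime$, so $K$ is strongly compact in $\mathcal F^\prime$ and \emph{a fortiori} weakly compact.

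For the converse, I would proceed in three steps. First, any weakly or strongly compact set $K\subset \mathcal F^\prime$ is bounded, hence equicontinuous on $\mathcal F$; this uses that $\mathcal F$, being Fr\'echet, is barrelled, so weakly bounded and equicontinuous subsets of the dual coincide. Equicontinuity combined with the description of the topology of $\mathcal F$ by the norms $\|\cdot\|_{\mathcal H_p}$ yields an index $p$ and a constant $C>0$ such that
\[
|\langle b,f\rangle|\le C\|f\|_{\mathcal H_p},\qquad b\in K,\ f\in\mathcal F.
\]
By density of $\mathcal F$ in $\mathcal H_p$, each $b\in K$ extends to a bounded linear functional on $\mathcal H_p$, so $K$ sits inside a ball of $\mathcal H_{-p}$. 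Second, I would invoke the compact-embedding hypothesis built into perfectness: choose $q>p$ so that the inclusion $\mathcal H_q\hookrightarrow \mathcal H_p$ is compact; passing to adjoints, $\mathcal H_{-p}\hookrightarrow \mathcal H_{-q}$ is compact, so the norm-bounded set $K$ is relatively compact in $\mathcal H_{-q}$. Third, I need to promote relative compactness to compactness: given a sequence $(b_n)\subset K$, compactness of $K$ in $\mathcal F^\prime$ furnishes a subsequence converging in $\mathcal F^\prime$ to some $b\in K$, while relative compactness in $\mathcal H_{-q}$ furnishes a further subsequence converging in $\mathcal H_{-q}$-norm to some $b^\prime\in \mathcal H_{-q}$; continuity of $\mathcal H_{-q}\hookrightarrow\mathcal F^\prime$ forces $b^\prime=b\in K$. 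This gives sequential compactness of $K$ in the metrizable space $\mathcal H_{-q}$, and therefore compactness.

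The step I expect to be the main obstacle is the first one, namely transferring (weak or strong) compactness from the non-metrizable inductive-limit space $\mathcal F^\prime$ to norm-boundedness in a single Hilbert step $\mathcal H_{-p}$. This is exactly where the perfectness assumption on $\mathcal F$ enters, via barrelledness and the description of its topology by the countable family of norms $\|\cdot\|_{\mathcal H_p}$. Once $K$ has been localized in some $\mathcal H_{-p}$, the remaining argument is a routine diagonal extraction powered by the compactness of the embeddings $\mathcal H_{-p}\hookrightarrow\mathcal H_{-q}$; the equivalence between weak and strong compactness in the assertion also then follows automatically, since on norm-bounded subsets of a Hilbert space closed under the weak-$\ast$ structure of $\mathcal F^\prime$ the two notions agree via the Eberlein--\v Smulian theorem applied in $\mathcal H_{-q}$.
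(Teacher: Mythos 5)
The paper does not actually prove this proposition: it is quoted as a standard fact from the theory of perfect countably-Hilbert spaces and their duals, with a pointer to Gel\cprime fand--Shilov and Gel\cprime fand--Vilenkin. Your write-up therefore supplies an argument the paper omits, and it is the standard one: localization of a compact set into a single $\mathcal H_{-p}$ via boundedness and equicontinuity (barrelledness of the Fr\'echet space $\mathcal F$), then upgrading to norm-compactness one step down the scale using the compact embedding $\mathcal H_{-p}\hookrightarrow\mathcal H_{-q}$. Note that you are really using the compact-embedding condition, which the paper states only as a \emph{sufficient} condition for perfectness; in this concrete setting that is exactly the right hypothesis to invoke, so this is fine.

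One step should be reordered. In your third step you say that compactness of $K$ in $\mathcal F^\prime$ ``furnishes a subsequence converging in $\mathcal F^\prime$''; since $\mathcal F^\prime$ is not metrizable, compactness does not by itself yield sequential compactness, so as written this begs the question. The fix is immediate with the ingredients you already have: given $(b_n)\subset K$, first use relative compactness of $K$ in the \emph{metrizable} space $\mathcal H_{-q}$ to extract a subsequence converging in $\mathcal H_{-q}$-norm to some $b^\prime$; continuity of $\mathcal H_{-q}\hookrightarrow\mathcal F^\prime$ shows the subsequence also converges to $b^\prime$ in $\mathcal F^\prime$, and since $K$ is compact in the Hausdorff space $\mathcal F^\prime$ it is closed there, whence $b^\prime\in K$. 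This gives sequential compactness, hence compactness, of $K$ in $\mathcal H_{-q}$. With that correction the proof is complete; the closing appeal to Eberlein--\v Smulian is not needed, since the compact embedding already converts weak compactness in $\mathcal H_{-p}$ into norm compactness in $\mathcal H_{-q}$.
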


\begin{proposition}
Assume $\mathcal F^\prime$ perfect. Then, weak and strong convergence of sequences are equivalent, and a sequence converges (weakly or strongly) if and only if it converges in one of the spaces ${\mathcal H}_{-p}$ in the corresponding norm.
\label{convergence}
\end{proposition}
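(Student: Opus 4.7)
The plan is to reduce both assertions to Proposition \ref{compact} together with the Montel-type property that ``perfect'' encodes (bounded sets are relatively compact), using the standard observation that in a Hausdorff space a convergent sequence together with its limit forms a compact set, and then passing to subsequences. The direction ``convergence in some $\mathcal H_{-p}$ implies strong convergence in $\mathcal F^\prime$'' is trivial from the continuity of the embedding $\mathcal H_{-p}\hookrightarrow\mathcal F^\prime$, and ``strong convergence implies weak convergence'' is immediate, so only two substantive implications remain.

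For the first substantive implication, I would assume $x_n\to x$ strongly in $\mathcal F^\prime$ and note that $K=\{x_n\}_{n\ge 1}\cup\{x\}$ is compact in $\mathcal F^\prime$. By Proposition \ref{compact}, $K$ sits inside some $\mathcal H_{-p}$ and is compact there in the $\|\cdot\|_{\mathcal H_{-p}}$-norm. A standard subsequence argument then yields $x_n\to x$ in $\mathcal H_{-p}$: any subsequence $(x_{n_k})$ admits, by compactness of $K$ in $\mathcal H_{-p}$, a further subsequence $(x_{n_{k_j}})$ convergent in $\mathcal H_{-p}$-norm to some $y$, and continuity of $\mathcal H_{-p}\hookrightarrow\mathcal F^\prime$ forces $y=x$; since every subsequence has a further sub-subsequence converging to $x$ in $\mathcal H_{-p}$, the entire sequence does.

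For the equivalence of weak and strong sequential convergence, I would suppose $x_n\to x$ weakly. Since $\mathcal F$ is Fréchet, hence barrelled, the Banach--Steinhaus theorem implies that $(x_n)$ is strongly bounded in $\mathcal F^\prime$; the perfectness assumption then gives that $\{x_n\}$ is relatively compact in $\mathcal F^\prime$. Arguing by contradiction, suppose $x_n$ does not converge strongly to $x$; then there is a strong neighborhood $U$ of $x$ and a subsequence $(x_{n_k})$ with $x_{n_k}\notin U$. Relative compactness furnishes a further subsequence converging strongly to some $y\in\mathcal F^\prime$, hence weakly to $y$ as well, so by uniqueness of weak limits $y=x$, contradicting $x_{n_k}\notin U$. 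Combining this with the first implication gives convergence in some $\mathcal H_{-p}$.

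The main obstacle is marshalling the definitional content of ``perfect'' correctly: it is precisely the Heine--Borel / Montel property that converts the strong boundedness supplied by Banach--Steinhaus into relative compactness, which is in turn the input needed by the subsequence argument. The other delicate point is that the subsequence argument must produce a common limit in $\mathcal H_{-p}$ (namely $x$) regardless of which sub-subsequence one extracts; this hinges on the continuity of the inclusion $\mathcal H_{-p}\hookrightarrow\mathcal F^\prime$, which is built into the inductive-limit description of $\mathcal F^\prime$ recalled in the paragraphs preceding the proposition.
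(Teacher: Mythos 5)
The paper itself offers no proof of this proposition: it is quoted, together with Proposition \ref{compact}, as a known fact from the theory of perfect countably-Hilbert spaces (the references being Gel\cprime fand--Vilenkin and Gel\cprime fand--Shilov), so there is no in-paper argument to compare against. Your proof reconstructs the standard argument from that theory and its overall structure is sound: the reduction of the two nontrivial implications to Proposition \ref{compact} plus the Heine--Borel content of ``perfect,'' the use of the compactness of $\{x_n\}\cup\{x\}$, the Urysohn sub-subsequence principle, and Banach--Steinhaus on the barrelled space $\mathcal F$ to convert weak boundedness into strong boundedness are all the right ingredients.

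The one step you should tighten is in the weak-to-strong implication, where you write that ``relative compactness furnishes a further subsequence converging strongly to some $y$.'' In the non-metrizable space $\mathcal F^\prime$ compactness gives convergent subnets, not convergent subsequences, so this extraction is not automatic. The fix is exactly the device you already use in the first half: by Proposition \ref{compact} the (closure of the) bounded set $\{x_n\}$ is compact in some $\mathcal H_{-p}$, which is a Hilbert space, and \emph{there} compactness does give sequential compactness in norm; norm convergence in $\mathcal H_{-p}$ then implies strong convergence in $\mathcal F^\prime$ by continuity of the inclusion. Routing the subsequence extraction through $\mathcal H_{-p}$ in this way also delivers, at no extra cost, the final claim that the weak limit is attained in one of the spaces $\mathcal H_{-p}$, which in your write-up is only obtained by appealing back to the first implication. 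With that adjustment the argument is complete.
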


A topological algebra is assumed to be separately continuous in each variable. It is not a trivial fact that a strong algebra is in fact jointly continuous in the two variables see \cite[IV.26, Theorem 2]{Bourbaki81EVT} and 
the discussion in \cite[pp. 215-216]{MR3404695}.\\

In the case we are interested, we define
\[
\mathcal{H}_{-p}(c_\alpha) = \left\{f = \sum_{\alpha\in\mathfrak{I}_0} f_\alpha i_\alpha \in\Lambda^{(2)} \suchthat \sum_{\alpha\in\mathfrak{I}_0} |f_\alpha|^2 c_\alpha^{-2p}<\infty \right\},
\]
where $p\in\mathbb{Z}$ and the coefficients $c_\alpha$'s form a sequence of positive real numbers such that
\begin{equation}
c_\alpha c_\beta \leq c_\gamma
\label{cond1}
\end{equation}
if $\alpha\vee\beta=\gamma$ and
\begin{equation}
\sum_{\alpha\in\mathfrak{I}_0} c_\alpha^{-2d} < \infty,
\label{cond2}
\end{equation}
where $d$ is a positive integer. Observe that
\[
\mathcal{H}_{-q}(c_\alpha) \subseteq \mathcal{H}_{-p}(c_\alpha)
\]
if $p\geq q$.

Henceforth, we denote $\mathcal{H}_{-p}(c_\alpha)$ simply by $\mathcal{H}_{-p}$.

\begin{definition}
The norm $\left\Vert f\right\Vert_{\mathcal{H}_{-p}}$ of $f\in\mathcal{H}_{-p}$ is defined as
\[
\left\Vert f\right\Vert_{\mathcal{H}_{-p}} \equiv \sum_{\alpha\in\mathfrak{I}_0} |f_\alpha|^2 c_\alpha^{-2p}.
\]
\end{definition}

\begin{proposition}
If $c_{\alpha\vee\beta}=c_\alpha c_\beta$, then $c_0=1$.
\label{c0}
\end{proposition}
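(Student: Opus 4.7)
The plan is straightforward and relies on the monoid structure of $\mathfrak{I}_0$ already observed in the paper, where $0$ (corresponding to $i_0=1$) is the identity element. First I would recall that for every $\alpha\in\mathfrak{I}_0$ one has $i_0 i_\alpha = i_\alpha$, and since no permutations are required, $\sigma(0,\alpha)=0$ and $0\vee\alpha=\alpha$.

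Next, I would apply the hypothesis $c_{\alpha\vee\beta}=c_\alpha c_\beta$ with $\beta=0$ (or equivalently $\alpha=0$) to obtain
\[
c_\alpha = c_{0\vee\alpha} = c_0\, c_\alpha
\]
for every $\alpha\in\mathfrak{I}_0$. Since the $c_\alpha$'s are required to be positive real numbers (in particular $c_\alpha\neq 0$), I may pick any single $\alpha$ and divide both sides by $c_\alpha$ to conclude $c_0=1$.

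There is essentially no obstacle: the statement is a direct consequence of $0$ being the neutral element of the monoid together with the positivity of the weights. The only thing worth flagging is that one needs at least one $c_\alpha$ to be nonzero, which is guaranteed by the standing assumption that the sequence $(c_\alpha)_{\alpha\in\mathfrak{I}_0}$ consists of strictly positive real numbers.
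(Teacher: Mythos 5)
Your proof is correct, and it is actually more direct than the one in the paper. You use the hypothesis $c_{\alpha\vee\beta}=c_\alpha c_\beta$ head-on: specializing to $\beta=0$ and using that $0$ is the identity of the monoid $\mathfrak{I}_0$ gives $c_\alpha=c_0c_\alpha$, and positivity of the weights lets you cancel to get $c_0=1$. The paper instead argues by contradiction in two cases, invoking the standing assumptions on the weights rather than the multiplicativity hypothesis itself: if $c_0>1$ the inequality \eqref{cond1} fails because $c_0c_\alpha>c_\alpha=c_{0\vee\alpha}$, and if $c_0<1$ the summability condition \eqref{cond2} is claimed to fail. Your route is shorter, needs nothing beyond the stated hypothesis and the positivity of the $c_\alpha$'s, and avoids the somewhat delicate second branch of the paper's argument (where the assertion $c_0^{-2d}<c_\alpha^{-2d}$ for all $\alpha$ is not fully justified as written). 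What the paper's approach buys, if anything, is the observation that the standing conditions \eqref{cond1} and \eqref{cond2} already constrain $c_0$ independently of full multiplicativity; but for the proposition as stated, your one-line cancellation is the cleaner proof.
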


\begin{proof}
On the one hand, if $c_0>1$, the inequality \eqref{cond1} does not hold in general since
\[
c_0 c_\alpha > c_\alpha = c_{0\vee\alpha}.
\]
On the other hand, if $c_0<1$, the condition \eqref{cond2} is not satisfied. In fact, $c_0^{-2d}<c_\alpha^{-2d}$ for every $\alpha\in\mathfrak{I}$ and, then, the sum in \eqref{cond2} diverges.
\end{proof}

\begin{proposition}
Let $f\in\mathcal{H}_{-p}$ with $c_\alpha>1$ if $\alpha\neq0$ and $c_0=1$. Then,
\[
\lim_{p\rightarrow\infty}\left\Vert f\right\Vert_{\mathcal{H}_{-p}} = |f_0|^2.
\]
\label{limit-norm}
\end{proposition}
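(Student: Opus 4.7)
The plan is to recognize this as a routine dominated convergence statement for the counting measure on $\mathfrak{I}_0$. The summands $a_\alpha(p) := |f_\alpha|^2 c_\alpha^{-2p}$ depend on the parameter $p$, and I need to show $\sum_{\alpha} a_\alpha(p) \to |f_0|^2$ as $p\to\infty$.

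First, I would record the pointwise behavior of each summand. For $\alpha = 0$, Proposition \ref{c0} (or the explicit hypothesis $c_0=1$) gives $c_0^{-2p}=1$, so $a_0(p)=|f_0|^2$ for all $p$. For $\alpha\neq 0$, the hypothesis $c_\alpha > 1$ forces $c_\alpha^{-2p}\to 0$ monotonically as $p\to\infty$, hence $a_\alpha(p)\to 0$. Thus, the termwise limit is $|f_0|^2\delta_{\alpha,0}$.

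Next, I would produce an integrable dominant. Since $f\in\mathcal H_{-p}$ by assumption and, by the monotonicity $\mathcal H_{-q}\subseteq \mathcal H_{-p'}$ for $p'\geq q$ already noted in the paper, $f$ also lies in $\mathcal H_{-p'}$ for every $p'\geq p$. More importantly, for every integer $p'\geq p$ and every $\alpha\in\mathfrak I_0$,
\[
a_\alpha(p') = |f_\alpha|^2 c_\alpha^{-2p'} \leq |f_\alpha|^2 c_\alpha^{-2p} = a_\alpha(p),
\]
because $c_\alpha\geq 1$. The right-hand side is summable in $\alpha$ with sum equal to $\Vert f\Vert_{\mathcal H_{-p}}<\infty$, providing the uniform (in $p'\geq p$) dominant required.

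Finally, invoking the dominated convergence theorem for the counting measure on $\mathfrak I_0$, I may interchange sum and limit:
\[
\lim_{p'\to\infty}\Vert f\Vert_{\mathcal H_{-p'}} = \lim_{p'\to\infty}\sum_{\alpha\in\mathfrak I_0}|f_\alpha|^2 c_\alpha^{-2p'} = \sum_{\alpha\in\mathfrak I_0}\lim_{p'\to\infty}|f_\alpha|^2 c_\alpha^{-2p'} = |f_0|^2,
\]
which is the desired conclusion. There is no real obstacle here; the only subtlety worth flagging is that one must first upgrade membership from the given $\mathcal H_{-p}$ to all higher indices in order to secure the dominant. Alternatively, one could argue directly by splitting the sum into $\{\alpha=0\}$ and $\{\alpha\neq 0\}$ and applying monotone convergence (in the decreasing form) to the tail.
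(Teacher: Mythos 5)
Your proof is correct and follows essentially the same route as the paper: identify the termwise limit $|f_\alpha|^2 c_\alpha^{-2p}\to |f_0|^2\delta_{\alpha,0}$ and interchange limit and sum. The only difference is that you explicitly justify the interchange (via dominated convergence, using $c_\alpha\geq 1$ and $f\in\mathcal H_{-p}$ to produce the summable dominant), a step the paper performs without comment, so your version is if anything slightly more complete.
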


\begin{proof}
Because $\lim_{p\rightarrow\infty} c_\alpha^{-2p}=0$ for every $\alpha\neq 0$,
\begin{eqnarray*}
\lim_{p\rightarrow\infty}\left\Vert f\right\Vert_{\mathcal{H}_{-p}} & = & \lim_{p\rightarrow\infty}\sum_{\alpha\in\mathfrak{I}_0} |f_\alpha|^2 c_\alpha^{-2p} \\
     & = & \sum_{\alpha\in\mathfrak{I}_0} |f_\alpha|^2 \lim_{p\rightarrow\infty} c_\alpha^{-2p} \\
     & = & |f_0|^2.
\end{eqnarray*}
\end{proof}

The following theorem introduces a V\r{a}ge-like inequality, which is the analogous of a result due to V\r{a}ge \cite{vage96} and allows the analysis of stochastic processes to be done locally in a Hilbert space.

\begin{theorem}
If $f\in\mathcal{H}_{-q}$ and $g\in\mathcal{H}_{-p}$, with $p> q$, then
\begin{equation}
\label{vage_ineq}
\left\Vert fg \right\Vert_{\mathcal{H}_{-p}} \leq C_{p-q} \left\Vert f\right\Vert_{\mathcal{H}_{-q}} \left\Vert g\right\Vert_{\mathcal{H}_{-p}},
\end{equation}
where $C_{p-q}$ is a positive constant.
\label{ineq}
\end{theorem}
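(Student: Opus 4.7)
The strategy is to expand the product in the Grassmann basis and push the weight $c_\gamma^{-2p}$ inwards using the submultiplicativity \eqref{cond1}, which lets one replace $c_\gamma^{-p}$ by $c_\alpha^{-p}c_\beta^{-p}$ termwise. A single Cauchy--Schwarz with the asymmetric splitting $c_\alpha^{-p}=c_\alpha^{-q}c_\alpha^{-(p-q)}$ then isolates the $\mathcal H_{-q}$-norm of $f$ from the $\mathcal H_{-p}$-norm of $g$, and the leftover factor is controlled by the summability hypothesis \eqref{cond2}.

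First I would start from
\[
fg=\sum_{\gamma\in\mathfrak I_0}\left(\sum_{\alpha\vee\beta=\gamma}(-1)^{\sigma(\alpha,\beta)}f_\alpha g_\beta\right)i_\gamma,
\]
so that
\[
\|fg\|_{\mathcal H_{-p}}=\sum_{\gamma\in\mathfrak I_0}c_\gamma^{-2p}\left|\sum_{\alpha\vee\beta=\gamma}(-1)^{\sigma(\alpha,\beta)}f_\alpha g_\beta\right|^{2}.
\]
The triangle inequality eliminates the signs, and applying \eqref{cond1} termwise (every pair in the inner sum satisfies $\alpha\vee\beta=\gamma$) gives
\[
\|fg\|_{\mathcal H_{-p}}\le \sum_{\gamma\in\mathfrak I_0}\left(\sum_{\alpha\vee\beta=\gamma}|f_\alpha|c_\alpha^{-p}|g_\beta|c_\beta^{-p}\right)^{2}.
\]
Writing the summand as $\bigl(|f_\alpha|c_\alpha^{-q}|g_\beta|c_\beta^{-p}\bigr)\cdot c_\alpha^{-(p-q)}$ and applying Cauchy--Schwarz over the finite set $\{(\alpha,\beta):\alpha\vee\beta=\gamma\}$, I obtain
\[
\|fg\|_{\mathcal H_{-p}}\le \sum_{\gamma\in\mathfrak I_0}\left(\sum_{\alpha\vee\beta=\gamma}|f_\alpha|^{2}c_\alpha^{-2q}|g_\beta|^{2}c_\beta^{-2p}\right)\left(\sum_{\alpha\vee\beta=\gamma}c_\alpha^{-2(p-q)}\right).
\]

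For fixed $\gamma$ the pairs with $\alpha\vee\beta=\gamma$ correspond exactly to the decompositions $\alpha\subseteq\gamma$, $\beta=\gamma\setminus\alpha$, so the second factor is dominated by the $\gamma$-independent quantity
\[
C_{p-q}:=\sum_{\alpha\in\mathfrak I_0}c_\alpha^{-2(p-q)},
\]
which is finite by \eqref{cond2} provided the gap $p-q$ is at least the distinguished integer $d$ appearing there. I would then swap the order of summation in the first factor: since each disjoint pair $(\alpha,\beta)$ corresponds to exactly one $\gamma=\alpha\vee\beta$, the double sum separates and produces
\[
\|fg\|_{\mathcal H_{-p}}\le C_{p-q}\sum_{\alpha\in\mathfrak I_0}|f_\alpha|^{2}c_\alpha^{-2q}\cdot\sum_{\beta\in\mathfrak I_0}|g_\beta|^{2}c_\beta^{-2p}=C_{p-q}\|f\|_{\mathcal H_{-q}}\|g\|_{\mathcal H_{-p}}.
\]

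The one delicate point is the finiteness of $C_{p-q}$: the hypothesis \eqref{cond2} is only postulated for a single exponent $d$, so the estimate above is unambiguously valid when $p-q\ge d$. This is precisely the regime relevant to the inductive-limit construction of $\mathfrak S_{-1}$, so the restriction is harmless; if a statement for every $p>q$ is desired, one absorbs this constraint into a relabelling of the scale $\{\mathcal H_{-p}\}$. A minor additional remark: the asymmetric splitting biases decay toward the $f$-side, so the analogous estimate with the roles of $f$ and $g$ exchanged is obtained by splitting $c_\beta^{-p}$ instead of $c_\alpha^{-p}$, with no change in the argument.
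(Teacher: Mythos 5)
Your proof is correct and follows essentially the same route as the paper's: push the weight through the product using the submultiplicativity $c_\alpha c_\beta\le c_{\alpha\vee\beta}$, apply Cauchy--Schwarz with the splitting $c_\alpha^{-p}=c_\alpha^{-q}c_\alpha^{-(p-q)}$, and identify the constant $C_{p-q}$ with (the square root of) $\sum_{\alpha}c_\alpha^{-2(p-q)}$; you merely package the two Cauchy--Schwarz applications of the paper into a single one over the pairs $(\alpha,\beta)$ with $\alpha\vee\beta=\gamma$. The only substantive difference is that the paper closes the argument by exhibiting an explicit family of weights $c_\alpha=e^{\sum_k\varphi(a_k)}$ for which $\sum_\alpha c_\alpha^{-2(p-q)}<\infty$, whereas you address the finiteness of $C_{p-q}$ by appealing to the summability hypothesis and a relabelling of the scale, which is a legitimate alternative.
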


\begin{proof}
Let $f\in\mathcal{H}_{-q}$ and $g\in\mathcal{H}_{-p}$. Hence, using the Cauchy-Schwarz inequality,
\begin{eqnarray*}
\left\Vert fg \right\Vert_{\mathcal{H}_{-p}}^2 & = & \sum_{\gamma\in\mathfrak{I}_0} |(fg)_\gamma|^2 c_\gamma^{-2p} \\
     & = & \sum_{\gamma\in\mathfrak{I}_0} \left|\sum_{\alpha\vee\beta=\gamma}(-1)^{\sigma(\alpha,\beta)}f_\alpha g_\beta\right|^2 c_\gamma^{-2p} \\
     & \leq & \sum_{\gamma\in\mathfrak{I}_0} \left(\sum_{\substack{\alpha\vee\beta=\gamma \\ \alpha'\vee\beta'=\gamma}}|f_\alpha| |g_\beta| |f_{\alpha'}| |g_{\beta'}| \right) c_\gamma^{-2p} \\
     & \leq & \sum_{\gamma\in\mathfrak{I}_0} \left(\sum_{\substack{\alpha\vee\beta=\gamma \\ \alpha'\vee\beta'=\gamma}}|f_\alpha| c_\alpha^{-p} \ |g_\beta| c_\beta^{-p} \ |f_{\alpha'}| c_{\alpha'}^{-p} \ |g_{\beta'}| c_{\beta'}^{-p} \right) \\
     & \leq & \sum_{\alpha,\alpha'\in\mathfrak{I}_0} |f_\alpha| c_\alpha^{-p} \ |f_{\alpha'}| c_{\alpha'}^{-p} \left( \sum_{\substack{\gamma\in\mathfrak{I}_0; \exists \beta,\beta' \\ \alpha\vee\beta=\gamma \\ \alpha'\vee\beta'=\gamma}} |g_\beta| c_\beta^{-p} \ |g_{\beta'}| c_{\beta'}^{-p} \right) \\
     & \leq & \sum_{\alpha,\alpha'\in\mathfrak{I}_0} |f_\alpha| c_\alpha^{-p} \ |f_{\alpha'}| c_{\alpha'}^{-p} \left( \sum_{\substack{\gamma\in\mathfrak{I}_0; \exists \beta \\ \alpha\vee\beta=\gamma}} |g_\beta|^2 c_\beta^{-2p}\right)^{1/2} \left(\sum_{\substack{\gamma\in\mathfrak{I}_0; \exists \beta' \\ \alpha'\vee\beta'=\gamma}} |g_{\beta'}|^2 c_{\beta'}^{-2p} \right)^{1/2} \\
     & \leq & \sum_{\alpha,\alpha'\in\mathfrak{I}_0} |f_\alpha| c_\alpha^{-p} \ |f_{\alpha'}| c_{\alpha'}^{-p} \left( \sum_{\beta\in\mathfrak{I}_0} |g_\beta|^2 c_\beta^{-2p}\right)^{1/2} \left(\sum_{\beta'\in\mathfrak{I}_0} |g_{\beta'}|^2 c_{\beta'}^{-2p} \right)^{1/2} \\
     & \leq & \left( \sum_{\alpha\in\mathfrak{I}_0} |f_\alpha| c_\alpha^{-p} \right)^2 \left\Vert g\right\Vert_{\mathcal{H}_{-p}}^2 \\
     & \leq & \left( \sum_{\alpha\in\mathfrak{I}_0} |f_\alpha| c_\alpha^{-q} \ c_\alpha^{q-p} \right)^2 \left\Vert g\right\Vert_{\mathcal{H}_{-p}}^2 \\
     & \leq & \left(\sum_{\alpha\in\mathfrak{I}_0} c_\alpha^{-2(p-q)} \right) \left\Vert f\right\Vert_{\mathcal{H}_{-q}}^2 \left\Vert g\right\Vert_{\mathcal{H}_{-p}}^2.
\end{eqnarray*}

Now, it remains to be shown that there exist $c_\alpha$'s such that $\sum_{\alpha\in\mathfrak{I}_0} c_\alpha^{-2(p-q)}<\infty$. We recall that $\alpha\in\mathfrak{I}$ means that $\alpha=(a_1,\hdots,a_n)$ for some integer $n\geq 1$, where $a_1,\hdots,a_n$ are positive integers such that $a_1< a_2 < \cdots < a_n$. The ``extension'' $\mathfrak{I}_0$ adds one more element element to $\mathfrak{I}$, namely $\alpha=0$. With that in mind, one possible family of weights $c_\alpha$ that satisfy \eqref{cond1} and \eqref{cond2} is given by
\[
c_\alpha = e^{\sum_{k=1}^n \varphi(a_k)}
\]
for every $\alpha\in\mathfrak{I}_0$, where $\varphi$ is a monotonically increasing real power series -- or at least the values $\varphi(n)$, for non-negative integers $n$, form a monotonically increasing sequence -- with $\varphi(0)=0$. We also require the coefficients of the power series to be bigger than $\ln 2^{1/2(p-q)}$.

Then, it is straightforward that
\[
c_\alpha c_\beta = c_\gamma
\]
if $\alpha\vee\beta=\gamma$ and $c_0=1$, as required by Proposition \ref{c0}. Moreover, if $d=p-q>0$
\[
\sum_{\alpha\in\mathfrak{I}_0} c_\alpha^{2(q-p)} = 1+  \sum_{n=1}^\infty \sum_{\substack{\alpha\in\mathfrak{I}; \\ |\alpha|=n}} e^{-2d\sum_{k=1}^n \varphi(a_k)}.
\]
Note that there exists $\xi>\ln 2^{1/2d}$ such that
\[
\sum_{\substack{\alpha\in\mathfrak{I}; \\ |\alpha|=1}} e^{-2d\varphi(a_1)} = \sum_{a_1=1}^\infty e^{-2d\varphi(a_1)} \leq \sum_{a_1=1}^\infty e^{-2d\xi a_1} = \frac{1}{e^{2d\xi}-1}<1.
\]
Also,
\[
\sum_{\substack{\alpha\in\mathfrak{I}; \\ |\alpha|=2}} e^{-2d (\varphi(a_1)+\varphi(a_2))} \leq \left(\sum_{a_1=1}^\infty e^{-2d\varphi(a_1)}\right)\left(\sum_{a_2=1}^\infty e^{-2d\varphi(a_2)}\right) \leq \left(\frac{1}{e^{2d\xi}-1}\right)^2
\]
and, in general,
\[
\sum_{\substack{\alpha\in\mathfrak{I}; \\ |\alpha|=n}} e^{-2d (\varphi(a_1)+\cdots+\varphi(a_n))} \leq \left(\sum_{a_1=1}^\infty e^{-2d\varphi(a_1)}\right)\cdots\left(\sum_{a_n=1}^\infty e^{-2d\varphi(a_n)}\right) \leq \left(\frac{1}{e^{2d\xi}-1}\right)^n.
\]
Hence,
\[
\sum_{\alpha\in\mathfrak{I}_0} c_\alpha^{-2d} \leq 1+  \sum_{n=1}^\infty \left(\frac{1}{e^{2d\xi}-1}\right)^n = 1+\frac{1}{e^{2d\xi}-2}.
\]
\end{proof}

We present the following corollary, whose proof is analogous to the one just presented for the previous theorem.

\begin{corollary}
If $f\in\mathcal{H}_{-p}$ and $g\in\mathcal{H}_{-q}$, with $p> q$, then
\begin{equation}
\left\Vert fg \right\Vert_{\mathcal{H}_{-p}} \leq C_{p-q} \left\Vert f\right\Vert_{\mathcal{H}_{-p}} \left\Vert g\right\Vert_{\mathcal{H}_{-q}},
\end{equation}
where $C_{p-q}$ is a positive constant.
\label{ineq2}
\end{corollary}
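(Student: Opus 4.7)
The plan is to mirror the proof of Theorem \ref{ineq}, exchanging the roles of $f$ and $g$ throughout. The asymmetry in the hypotheses ($f$ now sits in the ``larger'' space $\mathcal{H}_{-p}$ and $g$ in the ``smaller'' space $\mathcal{H}_{-q}$) is matched by an exchange of which factor gets the Cauchy--Schwarz slack absorbed into it, and which one ends up carrying the convergent geometric-type series coming from \eqref{cond2}.

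Concretely, I would start from
\[
\left\Vert fg \right\Vert_{\mathcal{H}_{-p}}^2 = \sum_{\gamma\in\mathfrak{I}_0}\Bigl|\sum_{\alpha\vee\beta=\gamma}(-1)^{\sigma(\alpha,\beta)} f_\alpha g_\beta\Bigr|^2 c_\gamma^{-2p},
\]
bound the inner sum by $\sum_{\alpha\vee\beta=\gamma,\,\alpha'\vee\beta'=\gamma}|f_\alpha||g_\beta||f_{\alpha'}||g_{\beta'}|$, and then, using \eqref{cond1} in the form $c_\gamma^{-p}\le c_\alpha^{-p}c_\beta^{-p}$ (and similarly for $\alpha',\beta'$), replace $c_\gamma^{-2p}$ by the product $c_\alpha^{-p}c_\beta^{-p}c_{\alpha'}^{-p}c_{\beta'}^{-p}$. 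Now, instead of pulling out the $f$-variables first (as done in the theorem), I would pull out the $g$-variables first, reorganizing the sum as
\[
\sum_{\beta,\beta'\in\mathfrak{I}_0}|g_\beta|c_\beta^{-p}|g_{\beta'}|c_{\beta'}^{-p}\Bigl(\sum_{\substack{\gamma;\exists\alpha,\alpha'\\ \alpha\vee\beta=\gamma,\;\alpha'\vee\beta'=\gamma}}|f_\alpha|c_\alpha^{-p}|f_{\alpha'}|c_{\alpha'}^{-p}\Bigr).
\]

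The inner double sum is then split via Cauchy--Schwarz into two identical factors, each dominated by $\sum_{\alpha\in\mathfrak{I}_0}|f_\alpha|^2 c_\alpha^{-2p}=\|f\|_{\mathcal{H}_{-p}}^2$, which yields an overall factor of $\|f\|_{\mathcal{H}_{-p}}^2$. This leaves
\[
\left\Vert fg \right\Vert_{\mathcal{H}_{-p}}^2 \le \Bigl(\sum_{\beta\in\mathfrak{I}_0}|g_\beta|c_\beta^{-p}\Bigr)^{\!2}\|f\|_{\mathcal{H}_{-p}}^2,
\]
and now I would write $c_\beta^{-p}=c_\beta^{-q}\cdot c_\beta^{-(p-q)}$ and apply Cauchy--Schwarz a final time to obtain $\sum_\beta|g_\beta|c_\beta^{-p}\le \|g\|_{\mathcal{H}_{-q}}\bigl(\sum_\beta c_\beta^{-2(p-q)}\bigr)^{1/2}$. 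Setting $C_{p-q}^2=\sum_{\alpha\in\mathfrak{I}_0}c_\alpha^{-2(p-q)}$, which is finite by the same weights $c_\alpha = e^{\sum_k\varphi(a_k)}$ exhibited in the proof of Theorem \ref{ineq}, delivers the desired inequality.

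The steps are entirely routine given Theorem \ref{ineq}; the only thing that requires a moment of care is ensuring that the Cauchy--Schwarz splitting is done in the right order so that the $f$-factor comes out with the $\mathcal{H}_{-p}$-norm and the $g$-factor with the $\mathcal{H}_{-q}$-norm, which is simply the mirror of what was done before. Consequently I do not anticipate any substantive obstacle, and the constant $C_{p-q}$ can be chosen to be exactly the same as in Theorem \ref{ineq}.
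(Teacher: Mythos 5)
Your proposal is correct and is precisely the ``analogous'' argument the paper invokes: it mirrors the proof of Theorem \ref{ineq} with the roles of $f$ and $g$ exchanged, pulling out the $g$-variables first so that the Cauchy--Schwarz step dominates the inner sum by $\|f\|_{\mathcal{H}_{-p}}^2$ and the final splitting $c_\beta^{-p}=c_\beta^{-q}c_\beta^{-(p-q)}$ produces $\|g\|_{\mathcal{H}_{-q}}$ together with the same constant $C_{p-q}^2=\sum_{\alpha\in\mathfrak{I}_0}c_\alpha^{-2(p-q)}$. No substantive difference from the paper's intended proof.
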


\begin{definition}
We define the space
\[\mathfrak{S}_1 = \cap_{p\in\mathbb{Z}} \mathcal{H}_p\]
and its topological dual
\[\mathfrak{S}_{-1}=\cup_{p\in\mathbb{Z}}, \mathcal{H}_{-p},\]
which are respectively the analogous of the space of test functions and the space of tempered distributions in the classical cases.
\end{definition}

\begin{corollary}
The space $\mathfrak{S}_{-1}$ endowed with the product is a strong algebra.
\label{strong}
\end{corollary}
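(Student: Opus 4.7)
The plan is to deduce the corollary directly from the V\r{a}ge-like estimate of Theorem \ref{ineq}, which is designed for precisely this purpose. The space $\mathfrak{S}_{-1}=\cup_{p\geq 0}\mathcal{H}_{-p}$ carries the strong (equivalently, the inductive limit) topology, and is the topological dual of the Fr\'echet space $\mathfrak{S}_1=\cap_{p\geq 0}\mathcal{H}_p$. Combined with the abstract framework recalled before Proposition \ref{compact}, the only substantive thing left to check is that the bilinear product inherited from $\Lambda$ extends to $\mathfrak{S}_{-1}$ and satisfies the continuity conditions defining a strong algebra as in \cite{MR3404695}.

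First I would verify that the product is well defined on $\mathfrak{S}_{-1}$. Given $f,g\in\mathfrak{S}_{-1}$ there exist indices with $f\in\mathcal{H}_{-q}$ and $g\in\mathcal{H}_{-p}$; since the family is decreasing, we may enlarge $p$ so that $p>q$. Theorem \ref{ineq} then yields $fg\in\mathcal{H}_{-p}\subseteq \mathfrak{S}_{-1}$ with
\[
\|fg\|_{\mathcal{H}_{-p}}\leq C_{p-q}\,\|f\|_{\mathcal{H}_{-q}}\,\|g\|_{\mathcal{H}_{-p}}.
\]
Associativity, distributivity, and bilinearity are inherited from $\Lambda$ by density, so $\mathfrak{S}_{-1}$ becomes an algebra.

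Next I would establish separate continuity. Fix $f\in\mathcal{H}_{-q}$. The displayed inequality shows that the left multiplication operator $M_f$ is a bounded endomorphism of $\mathcal{H}_{-p}$ for every $p>q$, with operator norm at most $C_{p-q}\|f\|_{\mathcal{H}_{-q}}$. By the universal property of the inductive limit topology, $M_f$ descends to a continuous linear endomorphism of $\mathfrak{S}_{-1}$. The symmetric statement for right multiplication is identical thanks to Corollary \ref{ineq2}. Hence the product is separately continuous, so $\mathfrak{S}_{-1}$ is a topological algebra.

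Finally, to promote this to the strong algebra structure, I would invoke the dichotomy recalled in the paragraph following Proposition \ref{convergence}: for duals of Fr\'echet (in fact, perfect) spaces, the V\r{a}ge-type bound supplies exactly the hypocontinuity on bounded sets that, by \cite[IV.26, Theorem 2]{Bourbaki81EVT}, upgrades separate continuity to joint continuity on the strong topology. This is precisely the content of the abstract strong algebra definition of \cite{MR3404695}, applied to the chain $(\mathcal{H}_{-p})_{p\geq 0}$ governed by the weights $c_\alpha$ constructed in the proof of Theorem \ref{ineq}. The main obstacle, as I see it, is purely formal rather than analytic: carefully matching the indexing conventions between the Grassmann setting developed here and the abstract machinery of \cite{MR3404695}, and observing that the estimate \eqref{vage_ineq} fits the required template verbatim. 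No new computation should be needed beyond what is already furnished by Theorem \ref{ineq} and Corollary \ref{ineq2}.
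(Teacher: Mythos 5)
Your proposal is correct and follows essentially the same route as the paper's own (outline of a) proof: both deduce separate continuity of the product on each $\mathcal{H}_{-p}$ from the V\r{a}ge-like inequality of Theorem \ref{ineq} (and Corollary \ref{ineq2}), pass to the inductive limit topology on $\mathfrak{S}_{-1}=\cup_p\mathcal{H}_{-p}$, and then invoke the abstract framework of \cite{MR3404695} (together with the Bourbaki result on joint continuity) to conclude that one has a strong algebra. Your version is somewhat more detailed than the paper's outline, but there is no substantive difference in approach.
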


\begin{proof}[Outline of the proof]
We first endow $\mathfrak{S}_{-1}$ with the inductive topology. Theorem \ref{ineq} implies that the product is separately continuous in each $\mathcal{H}_{-p}$, which is equivalent to continuity in the inductive topology. Furthermore, $\mathfrak{S}_{-1}$ inherits the associativity of the product in $\Lambda$. We have, then, a Banach algebra structure. Thus, $\mathfrak{S}_{-1}$ can be seen as the inductive limit of Banach spaces, which makes it a strong algebra. See \cite{MR3404695} for more details.
\end{proof}

\begin{remark}
Note that the inductive topology is equivalent to the strong topology.
\end{remark}

\begin{remark}
The product would also be associative in $\overline{\Lambda}^{(2)}$ if it was a law of composition there.
\end{remark}

\begin{corollary}
Let $n\in\mathbb{N}$ and $f\in\mathcal{H}_{-p}\subseteq\mathcal{H}_{-p-2}$. Then,
\[
\left\Vert f^n\right\Vert_{\mathcal{H}_{-p-2}} \leq C_2^{n-1} \left\Vert f\right\Vert_{\mathcal{H}_{-p}}^n.
\]
\label{f-conv}
\end{corollary}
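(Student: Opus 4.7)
The plan is a straightforward induction on $n$, using Theorem \ref{ineq} as the engine at each step. First I would record the observation that the family of norms is monotone in the sense that $\|h\|_{\mathcal{H}_{-p-2}} \leq \|h\|_{\mathcal{H}_{-p}}$ for any $h \in \mathcal{H}_{-p}$; this is immediate from the definition together with $c_\alpha \geq 1$ (by Proposition \ref{c0} and the standing assumption $c_\alpha > 1$ for $\alpha \neq 0$), since increasing $p$ only decreases the weights $c_\alpha^{-2p}$. This delivers the base case $n=1$, where the desired inequality reads $\|f\|_{\mathcal{H}_{-p-2}} \leq \|f\|_{\mathcal{H}_{-p}}$ and $C_2^{0} = 1$.

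For the inductive step, suppose the statement holds for some $n \geq 1$. I would apply Theorem \ref{ineq} with $q$ taken to be $p$ and with the role of ``$p$'' there played by $p+2$, to the product $f \cdot f^n$, using $f \in \mathcal{H}_{-p}$ and $f^n \in \mathcal{H}_{-p-2}$ (the latter by the induction hypothesis). This yields
\[
\|f^{n+1}\|_{\mathcal{H}_{-p-2}} \;=\; \|f \cdot f^n\|_{\mathcal{H}_{-p-2}} \;\leq\; C_2\, \|f\|_{\mathcal{H}_{-p}}\, \|f^n\|_{\mathcal{H}_{-p-2}}.
\]
Inserting the induction hypothesis $\|f^n\|_{\mathcal{H}_{-p-2}} \leq C_2^{n-1}\|f\|_{\mathcal{H}_{-p}}^n$ gives $\|f^{n+1}\|_{\mathcal{H}_{-p-2}} \leq C_2^{n}\|f\|_{\mathcal{H}_{-p}}^{n+1}$, which is the claim for $n+1$.

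There is no real obstacle here; the only thing to be careful about is to choose the two levels in Theorem \ref{ineq} consistently so that the constant $C_2 = C_{(p+2)-p}$ appears at every step and does not degrade as $n$ grows. One could equivalently carry out the same induction using Corollary \ref{ineq2} by placing $f^n$ in the stronger slot and $f$ in the weaker one; the asymmetry of the Våge-type inequality means both orderings are available, and either one closes the induction with the same constant $C_2$.
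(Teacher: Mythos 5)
Your proof is correct and follows essentially the same route as the paper: both arguments amount to iterating Theorem \ref{ineq} with the gap $p-q=2$ (so the constant is always $C_2$) and invoking the monotonicity $\Vert f\Vert_{\mathcal{H}_{-p-2}}\leq\Vert f\Vert_{\mathcal{H}_{-p}}$ at the base step; the paper simply unrolls the recursion downward rather than phrasing it as an induction upward. No gaps.
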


\begin{proof}
First, note that $\left\Vert f\right\Vert_{\mathcal{H}_{-p-2}} \leq \left\Vert f\right\Vert_{\mathcal{H}_{-p}}$ for every $f\in\mathcal{H}_{-p}\subseteq\mathcal{H}_{-p-2}$. Hence, using Theorem \ref{ineq},
\begin{eqnarray*}
\left\Vert f^n\right\Vert_{\mathcal{H}_{-p-2}} & \leq & C_2 \left\Vert f\right\Vert_{\mathcal{H}_{-p}} \left\Vert f^{n-1}\right\Vert_{\mathcal{H}_{-p-2}} \\
     & \leq & C_2^2 \left\Vert f\right\Vert_{\mathcal{H}_{-p}}^2 \left\Vert f^{n-2}\right\Vert_{\mathcal{H}_{-p-2}} \\
     & \leq & C_2^{n-1} \left\Vert f\right\Vert_{\mathcal{H}_{-p}}^n.
\end{eqnarray*}
\end{proof}

\begin{corollary}
Let
\begin{equation}
F(\lambda)=\sum_{n\in\mathbb{N}_0} \alpha_n \lambda^n
\label{exp-ps}
\end{equation}
be an absolutely convergent power series in the open disk with radius $R$, where $\alpha_n,\lambda\in\mathbb{C}$ and $\mathbb{N}_0=\mathbb{N}\cup\{0\}$. If $f\in\mathcal{H}_{-p}$, then $F(f)$ converges in $\mathcal{H}_{-p-2}$ if
\begin{equation}
\left\Vert f\right\Vert_{\mathcal{H}_{-p}} < \frac{R}{C_2}.
\label{cond}
\end{equation}
\label{ps}
\end{corollary}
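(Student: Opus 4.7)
The plan is to leverage Corollary \ref{f-conv} directly: it gives a clean power-law bound on $\|f^n\|_{\mathcal{H}_{-p-2}}$ in terms of $\|f\|_{\mathcal{H}_{-p}}$, which is exactly what is needed to turn a complex scalar power series into a convergent series in the Hilbert space $\mathcal{H}_{-p-2}$. Since $\mathcal{H}_{-p-2}$ is a Hilbert space (hence complete), it will suffice to establish absolute convergence of the partial sums in its norm.

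More precisely, I would first bound the norms of the partial sums of $F(f)$. Using the triangle inequality together with Corollary \ref{f-conv}, one gets for every $N \in \mathbb{N}$
\[
\left\Vert \sum_{n=0}^{N}\alpha_n f^n \right\Vert_{\mathcal{H}_{-p-2}} \leq |\alpha_0|\,\|1\|_{\mathcal{H}_{-p-2}} + \sum_{n=1}^{N}|\alpha_n|\, C_2^{\,n-1}\,\|f\|_{\mathcal{H}_{-p}}^{\,n},
\]
and more importantly, for any $M<N$,
\[
\left\Vert \sum_{n=M+1}^{N}\alpha_n f^n \right\Vert_{\mathcal{H}_{-p-2}} \leq \frac{1}{C_2}\sum_{n=M+1}^{N}|\alpha_n|\bigl(C_2\,\|f\|_{\mathcal{H}_{-p}}\bigr)^{n}.
\]
By hypothesis \eqref{cond}, the quantity $\lambda_0 := C_2\,\|f\|_{\mathcal{H}_{-p}}$ is strictly less than $R$, so $\lambda_0$ lies inside the disk of absolute convergence of $F$. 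Therefore the scalar tail $\sum_{n=M+1}^{N}|\alpha_n|\lambda_0^{n}$ tends to $0$ as $M,N\to\infty$.

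This shows that the sequence of partial sums of $F(f)$ is Cauchy in the norm of $\mathcal{H}_{-p-2}$. Since $\mathcal{H}_{-p-2}$ is complete, the series converges there to an element which we denote $F(f)$, proving the statement.

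There is no serious obstacle in this argument: the only mild subtlety is the bookkeeping that Corollary \ref{f-conv} applies with the pair of indices $(-p,-p-2)$, which is precisely why the conclusion is placed in $\mathcal{H}_{-p-2}$ rather than in $\mathcal{H}_{-p}$, and why the constant $C_2$ (rather than some other $C_{p-q}$) appears in the radius condition \eqref{cond}. The gap of two indices is exactly what is needed to absorb the V\r{a}ge constant from Theorem \ref{ineq} into a geometric factor, which then controls the tail.
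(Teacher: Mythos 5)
Your argument is correct and follows essentially the same route as the paper: both reduce convergence of $F(f)$ in $\mathcal{H}_{-p-2}$ to the scalar series $\sum_n|\alpha_n|\bigl(C_2\|f\|_{\mathcal{H}_{-p}}\bigr)^n$ via Corollary \ref{f-conv}, and invoke absolute convergence of $F$ on the disk of radius $R$. Your version merely spells out the Cauchy/completeness step that the paper leaves implicit.
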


\begin{proof}
By assumption, if $|\lambda|<R$, the power series \eqref{exp-ps} converges absolutely, i.e., 
\[
\sum_{n\in\mathbb{N}_0} \left|\alpha_n \lambda^n\right|=\sum_{n\in\mathbb{N}_0} \left|\alpha_n\right| \left|\lambda^n\right|<\infty.
\]
Using Corollary \ref{f-conv}, we can study the absolutely convergence of $F(f)$ in $\mathcal{H}_{-p}$:
\begin{eqnarray*}
\sum_{n\in\mathbb{N}_0} \left\Vert\alpha_n f^n\right\Vert_{\mathcal{H}_{-p-2}} & = & \sum_{n\in\mathbb{N}_0} |\alpha_n|^2 \left\Vert f^n\right\Vert_{\mathcal{H}_{-p-2}} \\
     & \leq & \alpha_0+C_2^{-1} \sum_{n\in\mathbb{N}} |\alpha_n|^2 \left(C_2 \left\Vert f\right\Vert_{\mathcal{H}_{-p}}\right)^n.
\end{eqnarray*}
Then, $F(f)$ converges absolutely in $\mathcal{H}_{-p-2}$ if
\[
C_2 \left\Vert f\right\Vert_{\mathcal{H}_{-p}} < R \Rightarrow \left\Vert f\right\Vert_{\mathcal{H}_{-p}} < \frac{R}{C_2}.
\]
\end{proof}

\begin{corollary}
Let $F(\lambda)$ be a power series as in Corollary \ref{ps}. Then, $F(f)$ converges in $\mathfrak{S}_{-1}$ for $f\in\mathfrak{S}_{-1}$ if the body of $f$ satisfies \eqref{cond}.
\label{body-red}
\end{corollary}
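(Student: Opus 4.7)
The plan is to reduce Corollary \ref{body-red} directly to Corollary \ref{ps}, exploiting the fact that the scale of norms $\|\cdot\|_{\mathcal{H}_{-p}}$ relaxes as $p$ grows. Since $f\in\mathfrak{S}_{-1}$ lies in some $\mathcal{H}_{-q}$ and therefore in every $\mathcal{H}_{-p}$ with $p\ge q$, the idea is to push $p$ high enough so that $\|f\|_{\mathcal{H}_{-p}}$ is driven down to (almost) the body contribution $|f_0|^2$, at which point the radius-of-convergence estimate of Corollary \ref{ps} becomes available.

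First I would fix $q\in\mathbb Z$ with $f\in\mathcal{H}_{-q}$. Because $c_0=1$ (Proposition \ref{c0}), the body $f_B=f_0$ satisfies $\|f_B\|_{\mathcal{H}_{-p}}=|f_0|^2$ for every $p$, so the hypothesis that $f_B$ satisfies \eqref{cond} amounts to
\[
|f_0|^2<\frac{R}{C_2}.
\]
Setting $\varepsilon:=R/C_2-|f_0|^2>0$ and invoking Proposition \ref{limit-norm}, which asserts $\lim_{p\to\infty}\|f\|_{\mathcal{H}_{-p}}=|f_0|^2$, I obtain some integer $p_0\ge q$ for which $\|f\|_{\mathcal{H}_{-p_0}}<R/C_2$. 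Corollary \ref{ps} then applies to $f$ viewed as an element of $\mathcal{H}_{-p_0}$ and delivers absolute convergence of $F(f)=\sum_n\alpha_n f^n$ in $\mathcal{H}_{-p_0-2}$, and consequently in $\mathfrak{S}_{-1}$.

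I do not anticipate a substantive obstacle: the whole argument is essentially a one-line consequence of Proposition \ref{limit-norm} and Corollary \ref{ps}. The only point deserving care is the consistency between the paper's convention $\|f\|_{\mathcal{H}_{-p}}=\sum_\alpha|f_\alpha|^2 c_\alpha^{-2p}$ (i.e., the ``norm'' is really the squared norm) and the radius $R/C_2$ appearing in \eqref{cond}, which must be matched so that $\varepsilon$ is indeed positive under the stated body hypothesis. This is pure bookkeeping rather than an analytic step, and once it is nailed down the corollary follows.
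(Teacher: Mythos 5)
Your argument is correct and takes essentially the same route as the paper's own proof: both reduce the claim to Corollary \ref{ps} by invoking Proposition \ref{limit-norm} to choose $p$ large enough that $\left\Vert f\right\Vert_{\mathcal{H}_{-p}}<R/C_2$ once $|f_0|^2<R/C_2$. Your write-up is in fact slightly more explicit than the paper's (which states the reduction without extracting the index $p_0$), and your bookkeeping remark about the squared-norm convention is apt but not a substantive issue.
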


\begin{proof}
If $f\in\mathfrak{S}_{-1}$, there exists an integer $q_0$ such that $f\in\mathcal{H}_{-q}$ for every $q\geq q_0$. By Theorem \ref{ps}, for $F(f)$ to converge, it is necessary that $\left\Vert f\right\Vert_{\mathcal{H}_{-q}} < R/C_2$, what does not hold in general. However, because of Proposition \ref{limit-norm}, we can reduce this condition to
\[
\left| f_0\right|^2 < \frac{R}{C_2},
\]
proving the corollary.
\end{proof}

\begin{corollary}
Let $f\in\mathfrak{S}_{-1}$. Then, $f$ is invertible if and only if its body $f_0$ satisfies $f_0\neq 0$.
\end{corollary}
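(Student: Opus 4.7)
The plan is to prove both directions separately, with the forward direction essentially immediate and the reverse direction built on the power series machinery just developed.

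For the forward direction, suppose $f$ is invertible in $\mathfrak{S}_{-1}$ with inverse $g$. Since $fg=1$, looking at the coefficient of $i_0=1$ and observing that the body of a product is the product of the bodies (the body part comes only from the $\alpha=\beta=0$ term in $\sum (-1)^{\sigma(\alpha,\beta)}f_\alpha g_\beta i_{\alpha\vee\beta}$), we obtain $f_0 g_0 = 1$, forcing $f_0\neq 0$.

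For the reverse direction, assume $f_0\neq 0$. Since $f_0\in\mathbb{C}^*$ is a scalar, write $f=f_0(1+h)$ where $h=f_0^{-1}(f-f_0)\in\mathfrak{S}_{-1}$ has body $h_0=0$. The natural candidate for the inverse of $1+h$ is the Neumann-type series obtained from $F(\lambda)=\sum_{n\geq 0}(-\lambda)^n=1/(1+\lambda)$, which is absolutely convergent with radius of convergence $R=1$. Because $h_0=0$, Corollary \ref{body-red} applies and guarantees that $F(h)=\sum_{n\geq 0}(-h)^n$ converges in $\mathfrak{S}_{-1}$; concretely, choosing $p$ large enough so that $f\in\mathcal{H}_{-p}$, Proposition \ref{limit-norm} gives $\|h\|_{\mathcal H_{-p}}\to 0$ as $p\to\infty$, so we may fix $p$ with $\|h\|_{\mathcal H_{-p}}<1/C_2$ and the series converges in $\mathcal H_{-p-2}$.

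It remains to verify that $F(h)$ really inverts $1+h$. For each $N$, the partial sum $S_N=\sum_{n=0}^N(-h)^n$ telescopes under multiplication by $1+h$ on either side to give $(1+h)S_N = S_N(1+h) = 1-(-h)^{N+1}$, an identity that does not require commutativity because it only involves powers of a single element $h$. By Corollary \ref{f-conv}, $\|(-h)^{N+1}\|_{\mathcal H_{-p-2}}\leq C_2^{N}\|h\|_{\mathcal H_{-p}}^{N+1}\to 0$ since $C_2\|h\|_{\mathcal H_{-p}}<1$, and the V\r{a}ge inequality (Theorem \ref{ineq}) together with the strong algebra structure (Corollary \ref{strong}) ensures that multiplication by $1+h$ is continuous, so passing to the limit yields $(1+h)F(h)=F(h)(1+h)=1$. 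Therefore $f^{-1}=f_0^{-1}F(h)$ lies in $\mathfrak{S}_{-1}$, completing the argument.

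The main obstacle I anticipate is keeping straight which Hilbert space $\mathcal H_{-q}$ each term lives in as $p$ grows: one must apply Proposition \ref{limit-norm} to first pick $p$ large enough that $\|h\|_{\mathcal H_{-p}}<1/C_2$, then invoke Corollary \ref{f-conv} in $\mathcal H_{-p-2}$, and only afterwards reinterpret the resulting element as sitting in $\mathfrak{S}_{-1}$ via the inductive limit. Beyond this bookkeeping, the rest of the proof is a direct application of the tools already assembled in this section.
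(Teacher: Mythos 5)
Your proof is correct and follows essentially the same route as the paper: reduce to a unit body, expand the Neumann series $\sum_{n\ge 0}(1-f)^n$ (your $\sum(-h)^n$ after factoring out $f_0$), and invoke Corollary \ref{body-red} via the vanishing body of the soul part. You additionally verify, via the telescoping identity and continuity of the product, that the limit of the partial sums genuinely inverts $1+h$ --- a step the paper leaves implicit --- so your write-up is, if anything, slightly more complete.
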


\begin{proof}
On the one hand, if $g$ is the inverse of $f$ and its body is given by $g_0$, then
\[
fg=1 \Rightarrow f_0 g_0 = 1 \Rightarrow f_0\neq 0.
\]
On the other hand, if $f_0\neq 0$, consider without loss of generality $f_0=1$.  Then, Corollary \ref{body-red} implies that
\[
F(f) = \sum_{n\in\mathbb{N}_0} (1-f)^n
\]
converges if the body of $1-f$ is smaller than $C_2^{-1}$. However, $(1-f)_B=0$. Therefore, $g=F(f)\in\mathfrak{S}_{-1}$ and $g$ is the inverse of $f$.
\end{proof}


\section{Stochastic processes and their derivatives}
\setcounter{equation}{0}
\label{stocder}

In the literature, and in particular by Rogers (see, e.g., \cite{MR925919, rogers1994stochastic, rogers2003supersymmetry}), a class of $\Lambda$-valued functions of a real variable is considered and defined as stochastic processes. Here, we look at $\overline{\Lambda}^{(2)}$-valued functions, and present a different approach to stochastic processes in the Grassmannian setting. Our aim is to obtain a close counterpart of the noncommutative white noise space theory \cite{MR3231624,MR3038506}.\smallskip

Let us start reviewing this framework -- we refer the reader to \cite{MR3231624,MR1217253} for more details. First, we recall that the study of Gaussian stochastic processes can be made through the analysis of positive-definite kernels since there is a one-to-one correspondence between the two notions \cite{MR58:31324b,montreal}. In fact, the kernel coincides with the covariance of the stochastic process.

In the framework we are basing our model, as can be seen in \cite[Section 3]{aal2}, the processes are associated with positive-definite kernels of the form
\begin{equation}
K_\sigma(t,s) = \int_\mathbb{R} \frac{(e^{iut}-1)(e^{-ius}-1)}{u^2} d\sigma(u),
\label{kernel}
\end{equation}
where $\sigma$ is absolutely increasing continuous with respect to the Lebesgue measure, $d\sigma(u)=m(u)du$, such that the Stieltjes integral
\begin{equation}
\label{qwertyu}
\int_\mathbb{R}\frac{m(u)du}{u^2+1}<\infty.
\end{equation}
The reason for such choice is the fact that integrals of the form \eqref{kernel} correspond to correlation functions of zero-mean Gaussian processes with stationary increments -- see \cite{MR0012176,MR0004644}. An important example of such processes is the fractional Brownian motion, for which $d\sigma(u)=|u|^{1-2H}du$, with $H\in(1,2)$. In this case, the correlation function $K_\sigma(t,s)$ becomes
\begin{equation}
\label{h2h}
K(t,s)=\gamma_H\left(|t|^{2H}+|s|^{2H}-|t-s|^{2H}\right)
\end{equation}
where $\gamma_H$ depends only on $H$ and, if $H\neq 1/2$, is equal to
\[
\gamma_H = \frac{\cos(\pi H)\Gamma(2-2H)}{(1-2H)H},
\]
where $\Gamma$ is the Euler's Gamma function. Moreover, $\gamma_{1/2}= \pi $ by continuity.\smallskip

We, then, define an operator $S_m$ in $\mathbf{L}_2(\mathbb{R})$ such that
\begin{equation}
\widehat{S_mf}(u) = \sqrt{m(u)}\widehat{f}(u),
\label{Sm-def}
\end{equation}
where $\widehat{f}$ is the Fourier transform of $f$. Note that $S_m$ is, in general, unbounded. Its domain is
\[
\text{dom}\,S_m = \left\{f\in\mathbf{L}_2(\mathbb{R})\suchthat \int_\mathbb{R} m(u) |\widehat{f}(u)|^2 du<\infty\right\},
\]
which contains $\mathbf{1}_{[0,t]}$. Defining
\[
f_m(t) = S_m\mathbf{1}_{[0,t]}
\]
and using Plancherel's equality,
\begin{eqnarray*}
\left\langle f_m(t),f_m(s)\right\rangle_{\mathbf{L}_2(\mathbb{R})} & = & \frac{1}{2\pi} \left\langle \widehat{f}_m(t),\widehat{f}_m(s)\right\rangle_{\mathbf{L}_2(\mathbb{R})} \\
     & = & \frac{1}{2\pi} \left\langle \sqrt{m(u)}\widehat{\mathbf{1}}_{[0,t]},\sqrt{m(u)}\widehat{\mathbf{1}}_{[0,s]}\right\rangle_{\mathbf{L}_2(\mathbb{R})} \\
     & = & \frac{1}{2\pi} \left\langle m(u)\frac{e^{-iut}-1}{u},\frac{e^{-ius}-1}{u}\right\rangle_{\mathbf{L}_2(\mathbb{R})} \\
     & = & \frac{1}{2\pi} \int_\mathbb{R} \frac{(e^{iut}-1)(e^{-ius}-1)}{u^2} m(u) du. 
\end{eqnarray*}

In order to obtain the stochastic processes we are interested in, we construct a random variable associated with the functions $f_m(t)$. This is done with the introduction of the creation operator $\ell_h$, with $h\in\mathbf{L}_2(\mathbb{R})$, defined by
\[
\ell_h(f) = h\otimes f, \qquad f\in\Gamma(\mathbf{L}_2(\mathbb{R})),
\]
where $\Gamma(\mathbf{L}_2(\mathbb{R}))$ denotes the full Fock space associated with $\mathbf{L}_2(\mathbb{R})$. Finally, letting $T_h = \ell_h + \ell_h^*$, we define a random variable $X_m(t)$ as
\begin{equation}
X_m(t) \equiv T_{f_m(t)}.
\label{random-var-classical}
\end{equation}

Observe that the expected value of a random variable $X_m(t)$ can be defined by
\[
E(X_m(t)) = \left\langle \Omega,T_{f_m(t)}(\Omega)\right\rangle_\Gamma,
\]
where $\Omega$ is the vacuum state of $\Gamma$. Moreover, as expected:
\begin{equation}
E(X_m(t)X_m(s)) = \left\langle T_{f_m(t)}(\Omega),T_{f_m(s)}(\Omega)\right\rangle_\Gamma = \left\langle f_m(t),f_m(s) \right\rangle_{\mathbf{L}_2(\mathbb{R})}=K_\sigma(t,s),
\label{free-expected}
\end{equation}
where $K_\sigma(t,s)$ is given by \eqref{kernel} with $d\sigma(u)=m(u)du$, as already discussed.

The stochastic free processes associated with those variables have the concept of freeness -- in opposition to independence -- associated with them. Moreover, instead of Gaussian distributions, they have semi-circle distributions. Again, we refer the reader to \cite{MR3615375,MR1217253} for an extended discussion on this topic.

In the case of the stochastic processes we desire to define in the setting of the Grassmann numbers, we replace the operator $T_{f}$ in the expression \eqref{random-var-classical} by the operator we defined in expression \eqref{operator} with
\begin{eqnarray*}
f_m(t) & = & \sum_{n\in\mathbb{N}} \left\langle S_m \mathbf{1}_{[0,t]}, \xi_n\right\rangle_{\mathbf{L}_2(\mathbb{R})} i_n \\
     & = & \sum_{n\in\mathbb{N}} \left\langle \mathbf{1}_{[0,t]}, S_m\xi_n\right\rangle_{\mathbf{L}_2(\mathbb{R})} i_n \\
     & = & \sum_{n\in\mathbb{N}} \left(\int_0^t (S_m\xi_n)(u) du\right)i_n,
\end{eqnarray*}
where $\xi_n$ denotes the Hermite functions.

Hence,
\[
X_m(t) = \sum_{n\in\mathbb{N}} \left(\int_0^t (S_m\xi_n)(u) du\right)T_{i_n}.
\]

We can also define the expected value function $E$ in a similar manner as defined in the ``classical'' case:
\[
E(X_m(t)) = \langle 1, X_m(t) 1\rangle_{\overline{\Lambda}^{(2)}},
\]
where $1$ is the vacuum state in the Fock space introduced in Section \ref{fock-space}. Note that $E(X_m(t))=0$.

Using \eqref{ipr}, we observe that the covariance, which gives the kernel $K(t,s)$, satisfies:
\[
E(X_m(t)X_m(s)) = K(t,s) = \langle X_m(t)1, X_m(s)1\rangle_{\overline{\Lambda}^{(2)}} = \langle f_m(t), f_m(s)\rangle_{\overline{\Lambda}^{(2)}}.
\]
Hence,
\[
K(t,s) = \sum_{n\in\mathbb{N}} \left(\int_0^t(S_m\xi_n)(u)du\right)\left(\int_0^s (S_m\xi_n)(u') du'\right),
\]
which is equivalent to the kernel \eqref{kernel}.

Now, we gather a few results concerning bounds for the operator $T_f$, where $f\in\mathcal{H}_{-p}$ for some $p\in\mathbb{N}$. They are relevant in the proof of Theorem \ref{thm55}.

\begin{proposition}
For every $f\in\mathfrak{S}_{-1}$ with $\left\Vert f\right\Vert_{\mathcal{H}_{-p}}<\infty$, the operator $M_f$ is bounded from $\mathcal{H}_p$ into $\mathcal{H}_{-p}$.
\label{Mf-result}
\end{proposition}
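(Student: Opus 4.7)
The plan is to derive the boundedness as a direct consequence of the V\r{a}ge-type inequality of Corollary~\ref{ineq2}, after sandwiching $\mathcal{H}_p$ between two negative-index spaces. The key preliminary observation I would establish is the continuous embedding $\mathcal{H}_p \hookrightarrow \mathcal{H}_{-q}$ for every $q \geq 0$: under the standing normalization $c_0 = 1$ together with $c_\alpha \geq 1$ for $\alpha \neq 0$ (enjoyed, e.g., by the weights $c_\alpha = e^{\sum_k \varphi(a_k)}$ constructed in the proof of Theorem~\ref{ineq}), one has $c_\alpha^{-2q} \leq c_\alpha^{2p}$ for all $\alpha$, whence
\[
\|g\|_{\mathcal{H}_{-q}}^2 = \sum_{\alpha\in\mathfrak{I}_0} |g_\alpha|^2 c_\alpha^{-2q} \leq \sum_{\alpha\in\mathfrak{I}_0} |g_\alpha|^2 c_\alpha^{2p} = \|g\|_{\mathcal{H}_p}^2.
\]

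Next I would fix an integer $q$ with $0 \leq q < p$ for which the constant $C_{p-q}$ of Corollary~\ref{ineq2} is finite, i.e., $\sum_\alpha c_\alpha^{-2(p-q)} < \infty$; as shown in the proof of Theorem~\ref{ineq}, this is achievable for the weights above (for instance with $q=0$ once $p\geq d$, where $d$ is the integer from condition~\eqref{cond2}). Applying Corollary~\ref{ineq2} to an arbitrary $g \in \mathcal{H}_p \subset \mathcal{H}_{-q}$ then gives
\[
\|M_f g\|_{\mathcal{H}_{-p}} = \|fg\|_{\mathcal{H}_{-p}} \leq C_{p-q}\,\|f\|_{\mathcal{H}_{-p}}\,\|g\|_{\mathcal{H}_{-q}} \leq C_{p-q}\,\|f\|_{\mathcal{H}_{-p}}\,\|g\|_{\mathcal{H}_p},
\]
which is the desired boundedness, with operator norm at most $C_{p-q}\|f\|_{\mathcal{H}_{-p}}$.

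Because the result is essentially a two-line specialization of an already-established inequality, no serious obstacle is expected. The only delicate point is checking that a valid $q$ can be picked for the given $p$, which is guaranteed by the construction of the weights $c_\alpha$ underlying $\mathfrak{S}_{-1}$. If one preferred to bypass Corollary~\ref{ineq2}, the same bound could be produced directly by expanding $(fg)_\gamma = \sum_{\alpha\vee\beta=\gamma}(-1)^{\sigma(\alpha,\beta)}f_\alpha g_\beta$, using $c_\gamma^{-p} \leq c_\alpha^{-p} c_\beta^{-p}$ from condition~\eqref{cond1}, and invoking Cauchy--Schwarz twice; routing through the existing V\r{a}ge-type estimate is cleaner.
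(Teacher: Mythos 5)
Your proof is correct and follows essentially the same route as the paper, which also deduces the bound as a one-line application of Corollary~\ref{ineq2}. The only difference is cosmetic: the paper applies the corollary directly with $g\in\mathcal{H}_p=\mathcal{H}_{-(-p)}$ (i.e., takes $q=-p$, yielding the constant $C_{2p}$), whereas you first pass through an intermediate embedding $\mathcal{H}_p\hookrightarrow\mathcal{H}_{-q}$ with $0\leq q<p$, which yields the same conclusion with a different constant.
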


\begin{proof}
Let $g\in\mathcal{H}_p$. Then, using Corollary \ref{ineq2}, we have
\[
\left\Vert M_f g\right\Vert_{\mathcal{H}_{-p}} = \left\Vert f g\right\Vert_{\mathcal{H}_{-p}} \leq C_{2p} \left\Vert f \right\Vert_{\mathcal{H}_{-p}} \left\Vert g\right\Vert_{\mathcal{H}_{p}}.
\]
\end{proof}

For the next result, and henceforth, we consider spaces $\mathcal{H}_{-p}$ where the coefficients $c_\alpha$ of the norm are of the type introduced in the proof of Theorem \ref{ineq}. In particular, if $\alpha\vee\beta=\gamma$, $c_\alpha c_\beta = c_\gamma$.

\begin{proposition}
For every $f\in\mathcal{H}_{-q}$, the operator $M^*_f$ is bounded from $\mathcal{H}_p$ into $\mathcal{H}_{-p}$, where $q<p$.
\end{proposition}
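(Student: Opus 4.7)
The plan is to reduce the statement to the V\r{a}ge-like inequality of Theorem \ref{ineq} via the duality between $\mathcal{H}_p$ and $\mathcal{H}_{-p}$ induced by the Fock space inner product. The paper has identified $\mathcal{H}_{-p}$ with the topological dual of $\mathcal{H}_p$ through the sesquilinear pairing $\langle\cdot,\cdot\rangle$ inherited from $\overline{\Lambda}^{(2)}$; in particular, one has the Cauchy--Schwarz-type bound $|\langle u,v\rangle|\leq\|u\|_{\mathcal{H}_p}\|v\|_{\mathcal{H}_{-p}}$, as well as the dual-norm formula $\|\phi\|_{\mathcal{H}_{-p}}=\sup\{|\langle\phi,h\rangle|:\|h\|_{\mathcal{H}_p}\leq 1\}$.

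Given $g,h\in\mathcal{H}_p$, I would exploit the adjoint relation $\langle M^*_f g,h\rangle=\langle g,M_f h\rangle=\langle g,fh\rangle$. The product $fh$ makes sense in $\mathcal{H}_{-p}$: the weights constructed in the proof of Theorem \ref{ineq} satisfy $c_\alpha\geq 1$, so $\mathcal{H}_{-q}\subseteq\mathcal{H}_{-p}$ and $\mathcal{H}_p\subseteq\mathcal{H}_{-p}$, and Theorem \ref{ineq} then produces $fh\in\mathcal{H}_{-p}$. Chaining the Cauchy--Schwarz bound with Theorem \ref{ineq} and with the inequality $\|h\|_{\mathcal{H}_{-p}}\leq\|h\|_{\mathcal{H}_p}$ (again a consequence of $c_\alpha\geq 1$) gives
\[
|\langle M^*_f g,h\rangle|=|\langle g,fh\rangle|\leq\|g\|_{\mathcal{H}_p}\|fh\|_{\mathcal{H}_{-p}}\leq C_{p-q}\|f\|_{\mathcal{H}_{-q}}\|g\|_{\mathcal{H}_p}\|h\|_{\mathcal{H}_p}.
\]
Taking the supremum over $h$ with $\|h\|_{\mathcal{H}_p}\leq 1$ and invoking the dual-norm formula above yields the sought bound $\|M^*_f g\|_{\mathcal{H}_{-p}}\leq C_{p-q}\|f\|_{\mathcal{H}_{-q}}\|g\|_{\mathcal{H}_p}$, so in particular $M^*_f$ is a bounded operator from $\mathcal{H}_p$ into $\mathcal{H}_{-p}$.

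The only delicate point is ensuring that $M^*_f g$ is a well-defined element of $\mathcal{H}_{-p}$ when $f$ lives only in the distribution space $\mathcal{H}_{-q}$ rather than in the Fock space itself. This is handled by \emph{defining} $M^*_f g$ to be the unique element of $\mathcal{H}_{-p}$ representing the continuous antilinear functional $h\mapsto\langle g,fh\rangle$ on $\mathcal{H}_p$; the inequality chain above simultaneously verifies the continuity of this functional and supplies the norm bound, so no further estimate is needed. In this sense the proposition is essentially a duality repackaging of Theorem \ref{ineq} that complements Proposition \ref{Mf-result}.
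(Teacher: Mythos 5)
Your argument is correct, but it takes a genuinely different route from the paper. The paper proves the proposition by a direct coefficient computation: it expands $(M^*_f g)_\gamma=\sum_{\alpha\vee\gamma=\beta}(-1)^{\sigma(\alpha,\gamma)}f_\alpha g_\beta$, uses the multiplicativity $c_\alpha c_\gamma=c_\beta$ to redistribute the weights, and then applies Cauchy--Schwarz twice, essentially repeating the mechanics of the proof of Theorem \ref{ineq} in the adjoint configuration. You instead reduce everything to Theorem \ref{ineq} as a black box via the $\mathcal{H}_p$--$\mathcal{H}_{-p}$ duality pairing and the identity $\langle M^*_f g,h\rangle=\langle g,fh\rangle$; the steps you use (the Cauchy--Schwarz pairing bound, the dual-norm formula for the weighted $\ell^2$ spaces, the inclusions coming from $c_\alpha\geq 1$) are all valid for the weights constructed in the paper, and you correctly flag and resolve the only delicate point, namely that for distributional $f$ the operator $M^*_f g$ must be \emph{defined} through the functional $h\mapsto\langle g,fh\rangle$ --- one checks easily that this agrees coefficientwise with $\sum_\alpha\overline{f_\alpha}M^*_{i_\alpha}$, so it is the same operator the paper intends. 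Your approach is shorter and cleaner, and makes transparent that the proposition is the dual statement of the V\r{a}ge inequality; what it gives up is a small bonus hidden in the paper's computation, namely that the direct estimate actually bounds $\Vert M^*_f g\Vert_{\mathcal{H}_{p}}$ (the positive-index norm) by $C_{p-q}\Vert f\Vert_{\mathcal{H}_{-q}}\Vert g\Vert_{\mathcal{H}_p}$, a strictly stronger conclusion than boundedness into $\mathcal{H}_{-p}$. Since the paper only ever uses the $\mathcal{H}_{-p}$ bound (in Corollary \ref{ineq-operator}), this loss is harmless here, and both arguments yield the same constant $C_{p-q}$.
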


\begin{proof}
Let $g\in\mathcal{H}_{p}$. Then, we have
\begin{eqnarray*}
\left\Vert M^*_f g \right\Vert_{\mathcal{H}_{-p}}^2 \leq \left\Vert M^*_f g \right\Vert_{\mathcal{H}_{p}}^2 & = & \sum_{\gamma\in\mathfrak{I}_0} \left| \sum_{\alpha\vee\gamma=\beta} (-1)^{\sigma(\alpha,\gamma)}f_\alpha g_\beta\right|^2 c_\gamma^{2p} \\
     & \leq & \sum_{\gamma\in\mathfrak{I}_0} \left( \sum_{\alpha\vee\gamma=\beta} |f_\alpha| |g_\beta| \right)^2 c_\gamma^{2p} \\
     & \leq & \sum_{\alpha,\alpha'\in\mathfrak{I}_0} |f_\alpha| c_\alpha^{-p} |f_\alpha'| c_{\alpha'}^{-p} \sum_{\substack{\gamma\in\mathfrak{I}_0; \exists \beta,\beta' \\ \alpha\vee\gamma=\beta \\ \alpha'\vee\gamma=\beta'}} |g_\beta| c_\beta^{p} |g_\beta'| c_{\beta'}^{p} \\
     & \leq & \sum_{\alpha,\alpha'\in\mathfrak{I}_0} |f_\alpha| c_\alpha^{-p} |f_\alpha'| c_{\alpha'}^{-p} \left\Vert g\right\Vert_{\mathcal{H}_p}^2 \\
     & \leq & \left(\sum_{\alpha\in\mathfrak{I}_0} |f_\alpha| c_\alpha^{-q} c_\alpha^{-(p-q)} \right)^2 \left\Vert g\right\Vert_{\mathcal{H}_p}^2 \\
     & \leq & C_{p-q}^2 \left\Vert f\right\Vert_{\mathcal{H}_{-q}}^2 \left\Vert g\right\Vert_{\mathcal{H}_p}^2.
\end{eqnarray*}
\end{proof}

\begin{corollary}
For every $f\in\mathfrak{S}_{-1}$, the operator $T_f=M_f+M^*_f$ is bounded from $\mathfrak{S}_1$ into $\mathfrak{S}_{-1}$ and there exist $p>q$ such that
\[
\Vert T_f g\Vert_{-p} \leq 2C_1 \Vert f\Vert_{-q} \Vert g\Vert_p.
\]
\label{ineq-operator}
\end{corollary}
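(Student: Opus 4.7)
The plan is to deduce the bound by combining the two preceding propositions (on $M_f$ and on $M_f^\ast$) with the triangle inequality, after choosing the right pair of indices $(q,p)$. Since $f\in\mathfrak S_{-1}=\cup_{q}\mathcal H_{-q}$, I would first fix $q\in\mathbb Z$ with $f\in\mathcal H_{-q}$, and then simply set $p=q+1$; this is what forces the constant to appear as $C_{p-q}=C_1$.

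For the $M_f^\ast$ part there is nothing to do except apply the previous proposition directly with this pair: for every $g\in\mathcal H_p\subseteq \mathfrak S_1$,
\[
\Vert M_f^\ast g\Vert_{\mathcal H_{-p}}\le C_{1}\,\Vert f\Vert_{\mathcal H_{-q}}\,\Vert g\Vert_{\mathcal H_{p}}.
\]
For the $M_f$ part I would avoid Proposition \ref{Mf-result} (which gives the worse constant $C_{2p}$) and instead apply Theorem \ref{ineq} directly. Indeed, because the weights $c_\alpha\ge 1$, the norms $\Vert\cdot\Vert_{\mathcal H_p}$ are increasing in $p$, so $\mathcal H_{p}\hookrightarrow \mathcal H_{-p}$ contractively and $\Vert g\Vert_{\mathcal H_{-p}}\le \Vert g\Vert_{\mathcal H_{p}}$. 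Applying Theorem \ref{ineq} with this $(q,p)$ and then this embedding gives
\[
\Vert M_f g\Vert_{\mathcal H_{-p}}=\Vert fg\Vert_{\mathcal H_{-p}}
\le C_{1}\,\Vert f\Vert_{\mathcal H_{-q}}\,\Vert g\Vert_{\mathcal H_{-p}}
\le C_{1}\,\Vert f\Vert_{\mathcal H_{-q}}\,\Vert g\Vert_{\mathcal H_{p}}.
\]

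Adding the two estimates by the triangle inequality yields
\[
\Vert T_f g\Vert_{\mathcal H_{-p}}\le \Vert M_f g\Vert_{\mathcal H_{-p}}+\Vert M_f^\ast g\Vert_{\mathcal H_{-p}}\le 2C_{1}\,\Vert f\Vert_{\mathcal H_{-q}}\,\Vert g\Vert_{\mathcal H_{p}},
\]
which is the claimed inequality. Since every $g\in\mathfrak S_1$ lies in $\mathcal H_p$ with finite $\Vert g\Vert_{\mathcal H_p}$, and since $\mathcal H_{-p}\subseteq \mathfrak S_{-1}$, this shows that $T_f$ maps $\mathfrak S_1$ continuously into $\mathfrak S_{-1}$ (continuity in the inductive-limit/strong topology on $\mathfrak S_{-1}$ following from continuity into a single $\mathcal H_{-p}$).

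There is no real obstacle here; the only subtle point is the book-keeping of indices — in particular choosing $p=q+1$ so that $C_{p-q}=C_1$, and applying Theorem \ref{ineq} (rather than Corollary \ref{ineq2}) for the $M_f$ estimate in order to keep the same constant. Everything else is a direct combination of the previous two propositions with the triangle inequality.
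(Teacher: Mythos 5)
Your proof is correct and follows essentially the same route as the paper: triangle inequality, Theorem \ref{ineq} (rather than Proposition \ref{Mf-result}) for the $M_f$ term, the preceding proposition for the $M_f^\ast$ term, and the contractive embedding $\Vert g\Vert_{\mathcal H_{-p}}\le\Vert g\Vert_{\mathcal H_{p}}$. The only cosmetic difference is that you fix $p=q+1$ at the outset, whereas the paper keeps $p>q$ general and passes from $C_{p-q}$ to $C_1$ at the end using the monotonicity of the constants.
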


\begin{proof}
This is a direct consequence of the last two propositions. However, since Proposition \ref{Mf-result} is, in some sense, stronger than what we need, we note that if $f\in\mathcal{H}_{-q}$ and $g_\in\mathcal{H}_p$, the following holds from the last proposition and Theorem \ref{ineq}:
\begin{eqnarray*}
\Vert T_f g\Vert_{-p} & \leq & \Vert M_f g\Vert_{-p} + \Vert M^*_f g\Vert_{-p}\\
     & \leq & C_{p-q} \Vert f\Vert_{-q} \Vert g\Vert_{-p} + C_{p-q} \Vert f\Vert_{-q} \Vert g\Vert_p \\
     & \leq & 2C_{p-q} \Vert f\Vert_{-q} \Vert g\Vert_p \\
     & \leq & 2C_1 \Vert f\Vert_{-q} \Vert g\Vert_p.
\end{eqnarray*}
\end{proof}

To compute $dX_m(t)/dt$, we will make use of the spaces $\mathfrak{S}_1$ and $\mathfrak{S}_{-1}$ and assume certain growth conditions for $m$. First, we present the following 
proposition, the proof for which we refer the reader to \cite[Proposition 3.7 and Lemma 3.8]{aal3}.

\begin{proposition}
Let $m$ satisfy
\begin{equation}
m(u) \leq \left\{
\begin{array}{l l}
K |u|^{-b} & |u|\leq 1, \\
K |u|^{2N} & |u|> 1,
\end{array}
\right.
\label{m-form}
\end{equation}
where $b<2$, $N\in\mathbb{N}_0$, and $K$ is a positive real constant. Then,
\begin{equation}
\left|S_m\xi_n(t)\right| \leq D_1 n^{\frac{N+1}{2}} + D_2,
\label{S_m-ineq1}
\end{equation}
and
\begin{equation}
\left|S_m\xi_n(t)-S_m\xi_n(s)\right| \leq \left|t-s\right| \left(D_3 n^{\frac{N+2}{2}} + D_4\right),
\label{S_m-ineq2}
\end{equation}
where $D_1$, $D_2$, $D_3$, and $D_4$ are non-negative functions independent of $n$.
\end{proposition}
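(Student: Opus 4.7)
The plan is to pass to the Fourier side, where the operator $S_m$ acts by multiplication by $\sqrt{m(u)}$, and then split the resulting integral according to whether $|u|\le 1$ or $|u|>1$, applying the two halves of the hypothesis \eqref{m-form} on each piece. Because the Hermite functions $\xi_n$ are (up to a factor $(-i)^n$) eigenfunctions of the Fourier transform, we have
\[
S_m\xi_n(t)=\frac{(-i)^n}{2\pi}\int_{\mathbb R}\sqrt{m(u)}\,\xi_n(u)\,e^{iut}\,du,
\]
so in absolute value $|S_m\xi_n(t)|$ is controlled by $\int_{\mathbb R}\sqrt{m(u)}\,|\xi_n(u)|\,du$, uniformly in $t$.

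First I would handle the region $|u|\le 1$. Here \eqref{m-form} gives $\sqrt{m(u)}\le \sqrt{K}\,|u|^{-b/2}$, which is integrable since $b<2$, and the classical uniform bound $\|\xi_n\|_\infty\le C$ turns this piece into a constant $D_2$ independent of $n$ and $t$. For $|u|>1$ the hypothesis gives $\sqrt{m(u)}\le \sqrt{K}\,|u|^N$; I would then insert an auxiliary factor $|u|\cdot|u|^{-1}$ and apply Cauchy--Schwarz:
\[
\int_{|u|>1}|u|^N|\xi_n(u)|\,du
\;\le\;\Bigl(\int_{|u|>1}u^{-2}\,du\Bigr)^{1/2}\,\bigl\|u^{N+1}\xi_n\bigr\|_{\mathbf L_2(\mathbb R)}.
\]
The first factor is a finite absolute constant. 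For the second factor I would use the creation/annihilation identity $u\,\xi_n=\sqrt{n/2}\,\xi_{n-1}+\sqrt{(n+1)/2}\,\xi_{n+1}$, iterated $N+1$ times, which realises $u^{N+1}\xi_n$ as a finite linear combination of $\xi_{n-N-1},\dots,\xi_{n+N+1}$ with coefficients of order $n^{(N+1)/2}$. Since the $\xi_k$ are orthonormal in $\mathbf L_2(\mathbb R)$, this yields $\|u^{N+1}\xi_n\|_{\mathbf L_2(\mathbb R)}\le C\,n^{(N+1)/2}$, and combining gives \eqref{S_m-ineq1}.

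For \eqref{S_m-ineq2} the starting point is
\[
S_m\xi_n(t)-S_m\xi_n(s)=\frac{(-i)^n}{2\pi}\int_{\mathbb R}\sqrt{m(u)}\,\xi_n(u)\bigl(e^{iut}-e^{ius}\bigr)\,du,
\]
together with the elementary bound $|e^{iut}-e^{ius}|\le |t-s|\,|u|$. This produces one extra factor of $|u|$ inside the integral relative to the previous step. Repeating the split at $|u|=1$, the low-frequency piece is still absorbed into a $|t-s|\,D_4$ term because $|u|^{1-b/2}$ is still integrable on $[-1,1]$; in the high-frequency piece the same Cauchy--Schwarz trick now requires the bound $\|u^{N+2}\xi_n\|_{\mathbf L_2(\mathbb R)}\le C\,n^{(N+2)/2}$, obtained by one more application of the creation/annihilation identity. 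Pulling out the factor $|t-s|$ and relabelling constants gives \eqref{S_m-ineq2}.

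The only delicate point is the polynomial growth estimate $\|u^k\xi_n\|_{\mathbf L_2(\mathbb R)}\le C_k\,n^{k/2}$: it is the step that converts the spectral data $m$ into an explicit dependence on $n$, and the exponent $(N+1)/2$ in \eqref{S_m-ineq1} (respectively $(N+2)/2$ in \eqref{S_m-ineq2}) is determined precisely by how many times the operator $u$ must be applied to $\xi_n$ before Cauchy--Schwarz closes the integral at infinity. Everything else reduces to the two tail estimates in \eqref{m-form} and the uniform $L^\infty$ bound on Hermite functions.
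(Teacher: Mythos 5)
Your argument is correct, and it supplies a proof that the paper itself omits: the authors simply refer to \cite[Proposition 3.7 and Lemma 3.8]{aal3} for this proposition. Your route --- Fourier inversion using $\widehat{\xi_n}=c(-i)^n\xi_n$, the split at $|u|=1$ with the uniform bound $\sup_n\|\xi_n\|_\infty<\infty$ and integrability of $|u|^{-b/2}$ for $b<2$ on the low-frequency piece, and Cauchy--Schwarz against $\|u^{N+1}\xi_n\|_{\mathbf L_2}$ (resp.\ $\|u^{N+2}\xi_n\|_{\mathbf L_2}$ after extracting $|e^{iut}-e^{ius}|\le|u|\,|t-s|$) controlled by iterating the recurrence $u\,\xi_n=\sqrt{n/2}\,\xi_{n-1}+\sqrt{(n+1)/2}\,\xi_{n+1}$ --- is exactly the standard mechanism used in that reference, and it correctly accounts for the exponents $\frac{N+1}{2}$ and $\frac{N+2}{2}$. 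Your constants even come out independent of $t$, which is slightly stronger than the statement requires.
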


\begin{theorem}
\label{thm55}
Let $m$ be a positive measurable function, satisfying \eqref{m-form} and \eqref{qwertyu} (the latter for $d\sigma(t)=m(t)dt$). 
Then, for every $g\in\mathfrak{S}_1$ the function $t\mapsto X_m(t)f$ is strongly continuous in $\mathfrak{S}_{-1}$ and there exists a continuous operator $W_m(t)$ from $\mathfrak{S}_1$ into $\mathfrak{S}_{-1}$ such that
\begin{equation}
\frac{d}{dt}X_m(t)g = W_m(t)g.
\label{deriv-proc}
\end{equation}
Finally the function $t\mapsto W_m(t)g$ is continuous from $[a,b]$ into $\mathfrak{S}_{-1}$.
\end{theorem}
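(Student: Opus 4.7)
The plan is to reduce every claim to a coefficient-level estimate on $f_m(t)$ in the weighted norm $\|\cdot\|_{\mathcal{H}_{-p}}$ and then transport it to the operator level via Corollary \ref{ineq-operator}. Since $T_f$ depends linearly on $f$, one has $X_m(t)g - X_m(s)g = T_{f_m(t)-f_m(s)}g$, and the natural candidate for the derivative operator is $W_m(t) := T_{g_m(t)}$ where
\[
g_m(t) := \sum_{n\in\mathbb N}(S_m\xi_n)(t)\, i_n
\]
is the formal termwise derivative of $f_m(t)$.

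The first step is to verify that $g_m(t)$ lies in a fixed $\mathcal{H}_{-p}$ uniformly in $t\in[a,b]$. Since only singleton indices $\alpha=(n)$ appear, with the weights $c_{i_n}=e^{\varphi(n)}$ from the proof of Theorem \ref{ineq} one has
\[
\|g_m(t)\|_{\mathcal{H}_{-p}}^2 = \sum_{n\in\mathbb N}|(S_m\xi_n)(t)|^2 e^{-2p\varphi(n)}.
\]
Inserting \eqref{S_m-ineq1} and choosing $p$ large enough that $\sum_n n^{N+1}e^{-2p\varphi(n)}<\infty$ renders this series convergent uniformly in $t$. The same estimate yields $\sup_{t\in[a,b]}\|f_m(t)\|_{\mathcal{H}_{-p}}<\infty$, and Corollary \ref{ineq-operator} already provides continuity of both $X_m(t)$ and $W_m(t)$ as operators $\mathfrak S_1 \to \mathfrak S_{-1}$.

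Next I would establish strong continuity and differentiability of $t\mapsto f_m(t)$ in $\mathcal{H}_{-p}$. Continuity follows from $|f_m(t)_n-f_m(s)_n|\le|t-s|(D_1 n^{(N+1)/2}+D_2)$ applied to the weighted $\ell^2$ sum via dominated convergence. For differentiability, \eqref{S_m-ineq2} bounds the $n$-th coefficient of $\frac{f_m(t+h)-f_m(t)}{h}-g_m(t)$ by $\tfrac{|h|}{2}(D_3 n^{(N+2)/2}+D_4)$; enlarging $p$ further so that $\sum_n n^{N+2}e^{-2p\varphi(n)}<\infty$, dominated convergence again gives convergence to zero in $\mathcal{H}_{-p}$. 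An identical argument using \eqref{S_m-ineq2} shows that $t\mapsto g_m(t)$ is continuous in $\mathcal{H}_{-p}$.

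The operator-level conclusions then follow by another application of Corollary \ref{ineq-operator}: for every $g\in\mathfrak S_1$,
\[
\left\|\tfrac{X_m(t+h)-X_m(t)}{h}g - W_m(t)g\right\|_{-p} \le 2C_1 \left\|\tfrac{f_m(t+h)-f_m(t)}{h}-g_m(t)\right\|_{-q}\|g\|_p,
\]
which tends to $0$ and yields \eqref{deriv-proc}, while strong continuity of $t\mapsto X_m(t)g$ and $t\mapsto W_m(t)g$ in $\mathfrak S_{-1}$ follow the same way from continuity of $f_m$ and $g_m$. The main obstacle I expect is the bookkeeping of exponents: one must fix a single order $p$ and a weight profile $\varphi$ that simultaneously dominates the $n^{N+1}$ tail controlling $f_m$ and $g_m$, and the $n^{N+2}$ tail controlling the difference-quotient error, while still respecting the constraint $q<p$ of Corollary \ref{ineq-operator}. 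Since $\varphi$ may be chosen to grow as quickly as desired (for instance $\varphi(n)=cn$ with $c$ sufficiently large), this is careful calibration rather than a substantive difficulty.
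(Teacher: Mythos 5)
Your proposal is correct and follows essentially the same route as the paper: you take $W_m(t)=T_{g_m(t)}$ with $g_m(t)=\sum_n (S_m\xi_n)(t)\,i_n$ (the paper's $T_{df_m/dt}$), use \eqref{S_m-ineq1} to place $g_m(t)$ in a fixed $\mathcal{H}_{-p}$, use \eqref{S_m-ineq2} to control the difference quotient minus $g_m(t)$ coefficientwise, and transfer everything to the operator level via Corollary \ref{ineq-operator}. Your treatment is in fact slightly more complete, since you explicitly verify the strong continuity of $t\mapsto X_m(t)g$, which the paper's proof leaves implicit.
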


\begin{proof}
Since $f_m(t) = \sum_{n\in\mathbb{N}} f^m_n(t) i_n$, with $f^m_n = \int_0^t S_m\xi_n(u) du$, we use \eqref{S_m-ineq1} to obtain
\[
\left\Vert\frac{d}{dt} f_m(t) \right\Vert_{\mathcal{H}_{-p}}^2 = \sum_{n\in\mathbb{N}} \left|S_m\xi_n(t)\right|^2 c_n^{-2p} \leq \sum_{n\in\mathbb{N}} (D_1 n^{\frac{N+1}{2}}+D_2)^2 c_n^{-2p}.
\]
Then, for every suitable choice of coefficients $c_n$, there exists a positive integer $p_0$ such that, for every $p\geq p_0$, $df_m(t)/dt\in\mathcal{H}_{-p}$. Moreover, for $s=t+h$, where $h\neq0$ is a real number, there exists $p_1$ such that,
\[
K_p=\sum_{n\in\mathbb{N}}(D_3 n^{\frac{N+2}{2}}+D_4)^2 c_n^{-2p}<\infty, \qquad \forall p\geq p_1
\]
and, with \eqref{S_m-ineq2}, the $\mathcal{H}_{-p}$ norm of the difference of derivatives satisfies
\[
\sum_{n\in\mathbb{N}} \left|S_m\xi_n(s)-S_m\xi_n(t)\right|^2 c_n^{-2p} \leq  K_{p_1}|h|^2.
\]
Therefore, $W_m=T_{df_m/dt}$ is a continuous operator from $\mathfrak{S}_1$ into $\mathfrak{S}_{-1}$.

Finally, to see that \eqref{deriv-proc} holds, observe that for a real number $h\neq 0$ and $g\in\mathcal{H}_p$
\[
\left(\frac{X_m(t+h)-X_m(t)}{h}-W_m(t)\right)g = \sum_{n\in\mathbb{N}} \frac{\int_t^{t+h}(S_m\xi(u)-S_m\xi(t))du}{h} M_{i_n}g = X_{\Delta(t,h)}g,
\]
with
\[
\Delta(t,h) = \sum_{n\in\mathbb{N}} \frac{\int_t^{t+h}(S_m\xi(u)-S_m\xi(t))du}{h} i_n.
\]
Then, there exists $p>q\geq p_1$ such that, using Corollary \ref{ineq-operator},
\[
\left\Vert X_{\Delta(t,h)}g\right\Vert_{\mathcal{H}_{-p}} \leq 2C_1 \left\Vert\Delta(t,h)\right\Vert_{\mathcal{H}_{-q}} \left\Vert g\right\Vert_{\mathcal{H}_p} \leq \left(2C_1 K_{p_1}\left\Vert g\right\Vert_{\mathcal{H}_p}\right) |h|^2.
\]

Therefore, $W_m(t)=T_{df_m/dt}(t)\equiv dX_m(t)/dt$ is a continuous operator from $\mathfrak{S}_1$ into $\mathfrak{S}_{-1}$. Finally, using Proposition \ref{convergence}, one sees that the function $t\mapsto W_m(t)g$ is continuous.
\end{proof}

Once the treatment for derivatives is formalized, the next natural step would be the development of the counterpart of Ito/Malliavin stochastic calculus. The first step in this direction is the introduction of stochastic integrals. This is done in the following theorem, while stochastic calculus itself will be developed elsewhere.

\begin{theorem}
\label{stochasticintegral}
Let $t\mapsto Y(t)$, with $t\in[a,b]$, be a continuous $\mathfrak{S}_{-1}$-valued function in the strong topology of $\mathfrak{S}_{-1}$. If $g\in\mathfrak{S}_1$, there exists a positive integer $p$, which depends on $g$, such that the Pettis integral
\begin{equation}
\int_a^b Y(t) W_m(t)g \ dt
\label{integral}
\end{equation}
can be computed as a limit of Riemann sums and converges in $\mathcal{H}_{-p}$.
\end{theorem}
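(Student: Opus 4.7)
The plan is to localize the integrand in a single Hilbert space $\mathcal{H}_{-p}$ so that the standard Riemann integration theory for Banach-space-valued continuous functions on a compact interval applies; the V\r{a}ge-like inequality of Theorem \ref{ineq} will furnish the necessary norm control on the product.

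First, I would reduce both factors to one Hilbert level. Since $Y:[a,b]\to\mathfrak{S}_{-1}$ is strongly continuous, its image $Y([a,b])$ is compact in $\mathfrak{S}_{-1}$, so by Proposition \ref{compact} there exists $q_1$ such that $Y([a,b])$ is compact in $\mathcal{H}_{-q_1}$; combined with Proposition \ref{convergence}, this upgrades $Y$ to a continuous map $[a,b]\to \mathcal{H}_{-q_1}$ in the Hilbert norm. By Theorem \ref{thm55}, the map $t\mapsto W_m(t)g$ is continuous from $[a,b]$ into $\mathfrak{S}_{-1}$, and the same argument produces an integer $q_2=q_2(g)$ for which this map is continuous into $\mathcal{H}_{-q_2}$.

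Next, I would fix an integer $p>\max(q_1,q_2)$ and invoke Theorem \ref{ineq}. For each $t\in[a,b]$,
\[
\left\Vert Y(t)W_m(t)g \right\Vert_{\mathcal{H}_{-p}} \leq C_{p-q_1}\left\Vert Y(t)\right\Vert_{\mathcal{H}_{-q_1}}\left\Vert W_m(t)g\right\Vert_{\mathcal{H}_{-p}},
\]
so the integrand takes values in $\mathcal{H}_{-p}$ and is bounded there, since both factors on the right-hand side are bounded uniformly in $t$ by compactness. Continuity in the $\mathcal{H}_{-p}$-norm follows from the telescoping identity
\[
Y(t)W_m(t)g - Y(s)W_m(s)g = \left[Y(t)-Y(s)\right]W_m(t)g + Y(s)\left[W_m(t)g-W_m(s)g\right]
\]
together with two further applications of Theorem \ref{ineq} and the uniform bounds just described.

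Once $t\mapsto Y(t)W_m(t)g$ is a continuous map from the compact interval $[a,b]$ into the Banach space $\mathcal{H}_{-p}$, it is uniformly continuous, and the Riemann sums converge in the $\mathcal{H}_{-p}$-norm to an element of $\mathcal{H}_{-p}$. This limit coincides with the Pettis integral of the integrand viewed in $\mathfrak{S}_{-1}$, because norm convergence in $\mathcal{H}_{-p}$ forces convergence under pairing with every $h\in\mathfrak{S}_1\subset \mathcal{H}_{p}$. The main obstacle I anticipate is the very first step: promoting strong continuity of $Y$ in $\mathfrak{S}_{-1}$ to norm continuity in a \emph{single} $\mathcal{H}_{-q_1}$. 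Proposition \ref{compact} only locates the image inside one of the Hilbert spaces; the continuity upgrade relies on the perfectness of the triple (through Proposition \ref{convergence}) and the fact that on a compact subset of $\mathfrak{S}_{-1}$ the strong topology agrees with the norm topology of its enveloping Hilbert space. Once that reduction is in hand, the rest of the argument is essentially the classical Riemann-integration theory in Banach spaces.
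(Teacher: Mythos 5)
Your proof is correct, and it reaches the same endpoint as the paper (localize into a single $\mathcal{H}_{-p}$ via Propositions \ref{compact} and \ref{convergence}, then do classical Riemann integration in a Hilbert space), but it handles the product differently. The paper multiplies first and localizes second: it invokes the joint continuity of the product in $\mathfrak{S}_{-1}$ (the strong-algebra structure of Corollary \ref{strong}, resting on the Bourbaki result cited in Section \ref{topological-algebra}) to conclude that $t\mapsto Y(t)W_m(t)g$ is continuous into $\mathfrak{S}_{-1}$, and only then uses compactness of the image to drop into one $\mathcal{H}_{-p}$. You localize each factor first --- $Y$ into $\mathcal{H}_{-q_1}$ and $W_m(\cdot)g$ into $\mathcal{H}_{-q_2}$ --- and then control the product directly and quantitatively with the V\r{a}ge-type inequality of Theorem \ref{ineq}, including the telescoping estimate for continuity. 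Your route is more self-contained and explicit (it bypasses the nontrivial fact that separate continuity upgrades to joint continuity in the inductive limit), at the cost of applying the compactness/perfectness machinery twice instead of once; note that Theorem \ref{thm55} already places $W_m(t)g$ in a fixed $\mathcal{H}_{-p}$, so your second localization could be read off from that proof rather than re-derived. One point to keep in order: Theorem \ref{ineq} is stated for $fg$ with the \emph{left} factor in the lower-index space $\mathcal{H}_{-q}$, and since the algebra is noncommutative this matters --- your application with $Y(t)$ (resp.\ $Y(s)$) on the left in $\mathcal{H}_{-q_1}$ and $p>q_1$ does match the hypothesis, so the step is fine, but it is worth stating explicitly.
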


\begin{proof}[Outline of the proof]
As in \cite{aal3} the function $t\mapsto W_m(t)g$ is continuous from $[a,b]$ to $\mathfrak{S}_{-1}$. Since the product is jointly continuous in $\mathfrak{S}_{-1}$, the map $t\mapsto Y(t)W_m(t)g$ is continuous and its image is therefore compact in $\mathfrak{S}_{-1}$. By Proposition \ref{compact}, there exists $p$ such that the image of $t\mapsto Y(t)W_m(t)g$ is in $\mathcal{H}_{-p}$; this function is still continuous with respect to the topology of $\mathcal{H}_{-p}$, as is seen using Theorem \ref{convergence}. We then compute \eqref{integral} in $\mathcal{H}_{-p}$ using Riemann sums.
\end{proof}

As we already discussed, the random variables introduced here have the same expected value and the same covariance of the variables associated with free stochastic processes. In spite of it, the fact that the product in the integral \eqref{integral} is, ultimately, the product of $i_\alpha$'s, what contrasts with the ``classical'' case, which is a product of Hermite functions, shows that the processes induced by them are very different.


\section{Final remarks}
\setcounter{equation}{0}
\label{final-rmk}

In this paper, we have studied some aspects of the closure of the Grassmann algebra with respect to the $2$-norm, which we called the Fock space, as well as its embedding in Gel\cprime fand triples. We also introduced a new type of stochastic processes and an approach to study their derivatives. We present now research questions that emerge from the ideas discussed here. \smallskip

An important matter that is still not resolved is whether the product is a law of composition in the Fock space. We know that even if there is no problem of convergence for the product, the evaluation of stochastic integrals could be problematic if we had not introduced the space of stochastic distributions. However, addressing this question is crucial and either answer will, surely, lead to new results on the structure of the Fock space introduced here. \smallskip

Another direction that can be investigated is related to Wiener algebras. In \cite{MR3404695}, a special type of Wiener algebra which can be associated with any strong algebra was introduced in a generic framework. An open question is regarding whether the counterpart of such a Wiener algebra can reveal important aspects of the strong algebra we introduced, uncovering new aspects of it. \smallskip

The stochastic processes we used as a basis have the concept of freeness -- in opposition to independence -- associated with their random variables, and the distribution associated with them are semi-circles -- not Gaussians. A possible research direction is, then, the investigation of the questions: What is the independence-like concept associated with the random variables defined here? What are the distributions associated with the stochastic processes generated by them? Furthermore, as already mentioned at the end of the previous section, there is a possibility to develop the counterpart of Ito and Malliavin stochastic calculus in the present setting. One could also look for the physical processes modeled by them. \smallskip

Furthermore, another interesting direction is to look into generalizations of the processes defined here. We considered in Section \ref{stocder} the special cases where $d\sigma(t)=m(t)dt$. For a general $\sigma$, one cannot introduce the operator $S_m$ defined by \eqref{Sm-def}. However, it is possible to prove that there exists a continuous positive operator $A$ from the Schwartz space $\mathcal S$ into its dual $\mathcal S^\prime$ such that
\begin{equation}
\int_\mathbb{R} |\widehat{f}(u)|^2 d\sigma(u)=\langle Af,f\rangle_{\mathcal S^\prime,\mathcal S}.
\end{equation}
The operator $A$ can be factorized via a Hilbert space since $\mathcal S$ is nuclear -- see \cite{MR647140} for factorization theorems. An explicit construction of $A$ in the form $A=Q_\sigma^* Q_\sigma$, where $Q_\sigma$ is continuous from $\mathcal S$ into $\mathbf L_2(\mathbb R)$, is given in \cite{MR2793121}. \smallskip

Finally, a problem we intend to address soon in a different paper concerns the construction of the counterpart of linear systems and rational functions in the setting of the supernumbers. For the theory of linear systems in the setting of Hida's white noise theory and its noncommutative counterpart, see \cite{MR0255260, MR0452844, SSR}, and \cite{MR2610579, alp, MR3038506} for more recent works.

\section*{Acknowledgement}
Daniel Alpay thanks the Foster G. and Mary McGaw Professorship in Mathematical Sciences, which supported this research. Ismael L. Paiva acknowledges financial support from the Science without Borders program (CNPq/Brazil). Daniele C. Struppa thanks the Donald Bren Distinguished Chair in Mathematics, which supported this research.


\bibliographystyle{plain}
\def\cprime{$'$} \def\cprime{$'$} \def\cprime{$'$}
  \def\lfhook#1{\setbox0=\hbox{#1}{\ooalign{\hidewidth
  \lower1.5ex\hbox{'}\hidewidth\crcr\unhbox0}}} \def\cprime{$'$}
  \def\cprime{$'$} \def\cprime{$'$} \def\cprime{$'$} \def\cprime{$'$}
  \def\cprime{$'$}

\end{document}